\newcommand{\A}{\mathbb{A}}
\newcommand{\R}{\mathbb{R}}
\newcommand{\isoto}{\widetilde{\to}}
\newcommand{\vect}{\mathfrak{X}}
\newcommand{\id}{\mathrm{id}}
\newcommand{\pair}[2]{\langle #1, #2 \rangle}
\newcommand{\cbrack}[2]{\llbracket #1, #2 \rrbracket}
\newcommand{\cD}{\mathcal{D}}
\newcommand{\cM}{\mathcal{M}}
\newcommand{\cO}{\mathcal{O}_\cM}
\DeclareMathOperator{\Sh}{Sh}
\DeclareMathOperator{\sgn}{sgn}
\newcommand{\cupprod}{\!\smallsmile\!}
\newcommand{\lie}{\mathcal{L}}
\newcommand{\bnabla}{\bm{\nabla}}
\DeclareMathOperator{\Der}{Der}
\DeclareMathOperator{\End}{End}
\DeclareMathOperator{\im}{im}
\DeclareMathOperator{\tr}{tr}
\newcommand{\cs}{\mathrm{cs}}
\newcommand{\ch}{\mathrm{ch}}
\DeclareMathOperator{\coker}{coker}
\newcommand{\topp}{\mathrm{top}}
\DeclareMathOperator{\sym}{Sym}
\DeclareMathOperator{\rank}{rk}
\newcommand{\ad}{\mathrm{ad}}
\newtheorem{thm}{Theorem}[section]
\newtheorem{prop}[thm]{Proposition}
\newtheorem{lemma}[thm]{Lemma}
\newtheorem{cor}[thm]{Corollary}
\theoremstyle{definition}
\newtheorem{definition}[thm]{Definition}
\theoremstyle{remark}
\newtheorem{remark}[thm]{Remark}
\newtheorem{example}[thm]{Example}
\numberwithin{equation}{section}
\begin{document}

\title[Courant cohomology, Cartan calculus]{Courant cohomology, Cartan calculus, connections, curvature, characteristic classes}
\author{Miquel Cueca}
\address{Mathematics Institute\\Georg-August-University of G\"ottingen\\Bunsenstra{\ss}e 3-5\\G\"ottingen 37073\\Germany}
\email{micueten@impa.br}
\author{Rajan Amit Mehta}
\address{Department of Mathematics and Statistics\\  Smith College\\ 44 College Lane, Northampton, MA 01063\\ USA}
\email{rmehta@smith.edu}

\begin{abstract}
We give an explicit description, in terms of bracket, anchor, and pairing, of the standard cochain complex associated to a Courant algebroid. In this formulation, the differential satisfies a formula that is formally identical to the Cartan formula for the de Rham differential. This perspective allows us to develop the theory of Courant algebroid connections in a way that mirrors the classical theory of connections. Using a special class of connections, we construct secondary characteristic classes associated to any Courant algebroid.
\end{abstract}

\subjclass[2010]{16E45, 
17B63, 
53D17, 
58J28, 
70G45
} 

\keywords{Cartan formula, Courant algebroids, connections, characteristic classes, modular class} 

\maketitle

\section{Introduction}

Courant algebroids were introduced by Liu, Weinstein, and Xu \cite{lwx:courant}, axiomatizing the properties of brackets studied by Courant and Weinstein \cite{cw:courant, courant:dirac} and Dorfman \cite{dorfman} in the context of Dirac constraints. More recently, Courant algebroids have also appeared in the context of generalized geometry \cite{gualtieri:gcg}, double field theory \cite{deser-stasheff,hull-zwiebach}, and AKSZ sigma models \cite{aksz, roytenberg:aksz}. 

Associated to any Courant algebroid is a cochain complex, known as the \emph{standard complex}. The existence of the standard complex arises immediately from the correspondence, due to \v{S}evera \cite{severa:some} and Roytenberg \cite{dmitry:graded}, between Courant algebroids and degree $2$ symplectic dg-manifolds. In some special cases, such as exact Courant algebroids, the corresponding symplectic dg-manifold can be described explicitly. However, in general, the symplectic dg-manifold associated to a Courant algebroid $E \to M$ is defined implicitly as the minimal symplectic realization of $E[1]$, and explicit formulas for the standard complex and its differential are only available in local coordinates. This difficulty was nicely described by Ginot and Grutzmann \cite{ginot-grutzmann}, who wrote that the standard cohomology of a Courant algebroid ``is quite different from the usual cohomology theories \textellipsis where the cohomology is defined using a differential given by a Cartan-type formula.''

As a way of circumventing the above difficulties, Sti\'enon and Xu \cite{stienon-xu:modular} defined the \emph{na\"ive complex} of a Courant algebroid. They proved that, in degree $1$, the na\"ive cohomology is isomorphic to the standard cohomology; this result was sufficient for their construction of the modular class. In the case of a transitive Courant algebroid, Ginot and Grutzmann \cite{ginot-grutzmann} proved that the na\"ive cohomology is isomorphic to the standard cohomology.  However, for general Courant algebroids, the two cohomologies are different.

The first main result of this paper is to show that there is indeed a description of the standard complex for which the differential has a Cartan formula. The tradeoff is that we need to allow for cochains that only satisfy $C^\infty(M)$-multilinearity and skew-symmetry up to terms involving the bilinear form on the Courant algebroid. Such cochains appeared in the work of Keller and Waldmann \cite{keller-waldmann}, in an algebraic setting where the \v{S}evera-Roytenberg correspondence does not apply. Our contribution is to show that, in the smooth setting, the Keller-Waldmann complex is isomorphic to the standard complex (Theorem \ref{thm:isomorphism}) and that the differential satisfies a Cartan formula (Theorem \ref{thm:cartan}). 

This result is also closely related to work of Roytenberg \cite{roytenberg:dorfman}, where another cochain complex is constructed in an algebraic setting. In this complex, a cochain consists of a sequence of maps, satisfying certain compatibility conditions. Roytenberg showed that, in the smooth setting, this complex is isomorphic to the standard complex. In light of this, our result could be viewed as the further observation that, in the smooth setting, Roytenberg's complex is isomorphic to the Keller-Waldmann complex.

The validity of a Cartan formula provides a way to relate Courant algebroid cohomology to classical (e.g.\ de Rham or Chevalley-Eilenberg) cohomology theories, so that many proofs and calculations can carry over verbatim. As an application of this idea, we consider $E$-connections on $B$, where $E \to M$ is a Courant algebroid and $B \to M$ is a vector bundle. The notion of $E$-connection first appeared in an unpublished manuscript of Alekseev and Xu \cite{alekseev-xu}, and has since been used in numerous contexts, e.g.\ \cite{garcia-fernandez, gualtieri:branes, jurco}. In some of these papers, curvature is introduced, but we could find nowhere in the literature where curvature is interpreted as an $\End(B)$-valued $2$-cochain.

From our new vantage point, we see, in Section \ref{sec:connections}, that an $E$-connection $\nabla$ on $B$ corresponds to a covariant derivative operator $D_\nabla$ on the space of $B$-valued cochains, the curvature $F_\nabla$ is an $\End(B)$-valued $2$-cochain (Proposition \ref{prop:curv}), and that the Bianchi identity holds (Proposition \ref{prop:bianchi}). The Cartan formula allows for proofs that are formally identical to those in the classical theory of connections. 

In Section \ref{sec:modular}, we describe a construction of the modular class, essentially following Sti\'enon and Xu \cite{stienon-xu:modular}. But now, in light of our earlier results, we can interpret the calculations as taking place in the standard complex, rather than the na\"ive complex. We then prove that Courant algebroids are always unimodular (Proposition \ref{prop:modtensor}), simultaneously generalizing the classical result that quadratic Lie algebras are unimodular (e.g.\ \cite{meinrenken:book}) and the result of Sti\'enon and Xu \cite{stienon-xu:modular} that the double of a Lie bialgebroid is unimodular.

As a further application, we construct in Section \ref{sec:characteristic} higher secondary characteristic classes of Courant algebroids. This construction relies on an $E$-connection $\nabla^E$ on $E$ that can be naturally defined using a linear connection on $E$. The formula for $\nabla^E$ closely resembles the ``basic Dorfman connection'' in \cite{madeleine}, but with an additional term that corrects the failure to be $C^\infty(M)$-linear in the first entry. In general, $\nabla^E$ is not flat, but its primary cocycles $\tr(F_\nabla^k)$ vanish when $k$ is odd, and as a result secondary classes can be obtained. In the case of exact Courant algebroids, these secondary classes vanish.

Although characteristic classes can be constructed by other means, e.g.\ \cite{crainic-fernandes:secondary, m-gs:vbalgbds, kotov-strobl, lms:characteristic}, this approach is remarkable because it doesn't require sophisticated machinery such as representations up to homotopy or graded geometry, but instead follows a path that is similar to the classical construction of Chern and Simons \cite{chern-simons}.

In Section \ref{sec:dirac}, we conclude the paper with a few brief remarks about Dirac structures in relation to cohomology and characteristic classes.

\subsection*{Acknowledgements}

We would like to thank Henrique Bursztyn, Rui Loja Fernandes, Madeleine Jotz, Mathieu Sti\'enon, and Ping Xu for helpful conversations related to the paper. M.C. would also like to thank Smith College for hospitality during two visits. M.C. was partially supported by a Ph.D. grant given by CNPq. We would also like to thank the anonymous referee for suggestions that significantly improved the quality of the paper.

\section{Courant cohomology and the Cartan formula}

\subsection{The Keller-Waldmann algebra}
Let $E \to M$ be a vector bundle equipped with a nondegenerate symmetric bilinear form $\pair{\cdot}{\cdot}$. To such a structure we can associate a graded algebra, which we call the \emph{Keller-Waldmann algebra}. This algebra was defined by Keller and Waldmann \cite{keller-waldmann} in an algebraic setting.

\begin{definition}
\label{cochain}
A \emph{$k$-cochain} on $E$, $k \geq 1$, is a map 
\[
\omega: \underbrace{\Gamma(E) \times \cdots \times \Gamma(E)}_{k} \to C^\infty(M)
\]
that is $C^\infty(M)$-linear in the last entry and, for $k \geq 2$, such that there exists a map
\[
\sigma_\omega: \underbrace{\Gamma(E) \times \cdots \times \Gamma(E)}_{k-2} \to \vect(M),
\]
called the \emph{symbol}, such that
\begin{align*}
&\omega(e_1, \dots, e_i, e_{i+1}, \dots, e_k) + \omega(e_1, \dots, e_{i+1}, e_i, \dots, e_k) \\
&= \sigma_\omega (e_1, \dots, \widehat{e}_i, \widehat{e}_{i+1}, \dots, e_k)(\pair{e_i}{e_{i+1}})
\end{align*}
for all $e_j \in \Gamma(E)$ and $1 \leq i \leq k-1$.
\end{definition}
We write $C^k(E)$ to denote the space of $k$-cochains. By definition, we set $C^0(E) = C^\infty(M)$.

\begin{remark}
If $E$ has positive rank, then the map $\sigma_\omega$, which controls the failure of $\omega$ to be skew-symmetric, is uniquely determined by $\omega$. Observe that $\omega\in C^k(E)$ is such that $\sigma_\omega=0$ if and only if $\omega$ is $C^\infty(M)$-linear in each entry and totally skew-symmetric. Thus there is a natural inclusion map $i: \bigwedge^k \Gamma(E^*) \hookrightarrow C^k(E)$.
\end{remark}

In low degrees, we can give simple descriptions of $C^k(E)$:
\begin{itemize}
	\item $C^0(E)=C^\infty(M)$.
	\item $C^1(E)=\Gamma(E^*)$, which can be naturally identified with $\Gamma(E)$ via the bilinear form.
	\item Given $\omega \in C^2(E)$, let $\widehat{\omega}: \Gamma(E) \to \Gamma(E)$ be given by 
	\[\pair{\widehat{\omega}(e_1)}{e_2}=\omega(e_1,e_2).\]
	Then $\widehat{\omega}$ is a covariant differential operator (CDO) with symbol $\sigma_\omega$, i.e.\ it satisfies
	\[ \widehat{\omega}(fe) = \sigma_\omega(f) e + f\widehat{\omega}(e)\]
	for all $f \in C^\infty(M)$ and $e \in \Gamma(E)$, and it is skew-symmetric, i.e.\ it satisfies
	\[ \sigma_\omega \pair{e_1}{e_2} = \pair{\widehat{\omega}(e_1)}{e_2} + \pair{e_1}{\widehat{\omega}(e_2)}\]
	for all $e_1, e_2 \in \Gamma(E)$. The CDOs are the sections of a Lie algebroid $\A_E$, known as the \emph{Atiyah algebroid}, and the skew-symmetric CDOs are the sections of a Lie subalgebroid, which we denote $\A_E^{\pair{}{}}$. It is a simple exercise to show that the map $\omega \mapsto \hat{\omega}$ gives an isomorphism $C^2(E) \cong \Gamma(\A_E^{\pair{}{}})$.
\end{itemize}

The space of cochains $C^\bullet(E)$ is a graded-commutative algebra, where the product is given by
	\[
		(\omega\cupprod\tau)(e_1,\cdots, e_{k+m})=\sum_{\pi\in \Sh(k,m)}\sgn(\pi)\ \omega(e_{\pi(1)},\cdots,e_{\pi(k)})\tau(e_{\pi(k+1)},\cdots, e_{\pi(k+m)})
	\]
	for $\omega\in C^k(E)$ and $\tau\in C^m(E)$; see \cite[Corollary 3.17]{keller-waldmann}\footnote{The formula in \cite[Corollary 3.17]{keller-waldmann} has an extra factor of $(-1)^{km}$, due to a slightly different choice of convention.}.

The following was proven in \cite[Corollary 5.11]{keller-waldmann} under a hypothesis that always holds in the smooth setting.
\begin{prop}\label{KWcomplex}
$C^\bullet(E)$ is generated by $C^0(E), C^1(E)$ and $C^2(E)$ as an algebra.
\end{prop}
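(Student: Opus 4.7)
The strategy is induction on $k$, with base cases $k \leq 2$ immediate from the low-degree descriptions given above. For the inductive step, the key observation from the remark is that a $k$-cochain $\omega$ with $\sigma_\omega = 0$ is automatically totally skew-symmetric and $C^\infty(M)$-multilinear, and therefore lies in $\bigwedge^k \Gamma(E^*)$, which is generated under $\cupprod$ by $C^1(E) = \Gamma(E^*)$. Hence it suffices to produce, for each $\omega \in C^k(E)$ with $k \geq 3$, an element $\eta$ of the subalgebra $\mathcal{A} \subseteq C^\bullet(E)$ generated by $C^{\leq 2}(E)$ such that $\sigma_\eta = \sigma_\omega$; then $\omega - \eta$ will lie in $\bigwedge^k \Gamma(E^*) \subset \mathcal{A}$.

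I would construct $\eta$ locally and globalize by a partition of unity $\{\phi_\alpha\}$ subordinate to a trivializing cover $\{U_\alpha\}$: if on each $U_\alpha$ we have $\eta_\alpha$ in the locally-generated subalgebra with $\sigma_{\eta_\alpha} = \sigma_\omega|_{U_\alpha}$, then $\eta := \sum_\alpha \phi_\alpha \eta_\alpha$ lies in $\mathcal{A}$ and has $\sigma_\eta = \sigma_\omega$, since the symbol map is $C^\infty(M)$-linear in the cochain. The base case of the local construction, $k = 3$, is explicit: fix a frame $\{\epsilon_I\}$ on $U$ with dual $\{\epsilon^I_*\} \subset C^1(E|_U)$; since $\sigma_\omega \colon \Gamma(E|_U) \to \vect(U)$ is $C^\infty(U)$-linear, one may write $\sigma_\omega = \sum_I \epsilon^I_* \otimes X_I$ with $X_I := \sigma_\omega(\epsilon_I)$. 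Skew-symmetric CDOs $\tau_I \in C^2(E|_U)$ with $\sigma_{\tau_I} = X_I$ are built from any local linear connection $\nabla$ on $E|_U$: the CDO $\nabla_{X_I}$ has symbol $X_I$ and becomes skew after subtracting the symmetric endomorphism $A_I$ defined via the pairing by $2\pair{A_I e_1}{e_2} := \pair{\nabla_{X_I} e_1}{e_2} + \pair{e_1}{\nabla_{X_I} e_2} - X_I\pair{e_1}{e_2}$. A direct expansion of $\cupprod$ then yields $\sigma_{\epsilon^I_* \cupprod \tau_I}(e) = \epsilon^I_*(e) X_I$, so $\eta := \sum_I \epsilon^I_* \cupprod \tau_I$ achieves $\sigma_\eta = \sigma_\omega$ on $U$.

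For $k \geq 4$ I would run a secondary induction: the contractions $(\iota_{\epsilon_I}\omega)(e_2,\ldots,e_k) := \omega(\epsilon_I, e_2, \ldots, e_k)$ define $(k-1)$-cochains, which by the outer inductive hypothesis lie in $\mathcal{A}$, and combining them via suitable $\cupprod$-products against $\epsilon^I_*$ (together with corrections handling the remaining swap relations) produces an $\eta$ of the required form. The main obstacle is precisely this $k \geq 4$ step: the symbol $\sigma_\omega \colon \Gamma(E)^{\times(k-2)} \to \vect(M)$ inherits intricate multilinearity and symmetry constraints from the numerous swap relations of $\omega$, and matching it with a symbol arising from $\mathcal{A}$ demands systematic bookkeeping. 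This is precisely what \cite[Corollary 5.11]{keller-waldmann} accomplishes in the algebraic setting; the only extra ingredient needed in the smooth category is the surjectivity of $\sigma \colon C^2(E) \to \vect(M)$, which the connection-plus-symmetrization argument above supplies.
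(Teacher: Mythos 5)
The paper does not actually prove this proposition: it is quoted from \cite[Corollary 5.11]{keller-waldmann}, with only the remark that the hypothesis of that corollary always holds in the smooth setting. Measured against that, your proposal is consistent with the paper and adds genuine content. Your overall reduction is sound: since a cochain with vanishing symbol is totally skew-symmetric and $C^\infty(M)$-multilinear, hence lies in $\bigwedge^k\Gamma(E^*)$ (which is generated by $C^1(E)$ over $C^0(E)$ via partitions of unity), it suffices to match symbols by elements of the subalgebra generated in degrees $\le 2$. Your degree-$3$ computation is correct --- one checks directly from the shuffle formula that $\sigma_{\alpha\cupprod\tau}(e)=\alpha(e)\,\sigma_\tau$ for $\alpha\in C^1(E)$, $\tau\in C^2(E)$ --- and your connection-plus-symmetrization construction of a skew-symmetric CDO with prescribed symbol correctly isolates the ingredient that makes the Keller--Waldmann hypothesis automatic in the smooth category (existence of connections, equivalently surjectivity of $\sigma\colon C^2(E)\to\vect(M)$).

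The one place where your argument is genuinely incomplete is the step you yourself flag, $k\ge 4$. Two points there deserve caution. First, reassembling $\omega$ from the contractions $\iota_{\epsilon_I}\omega$ via $\sum_I\epsilon^I_*\cupprod\iota_{\epsilon_I}\omega$ does not reproduce $\omega$ on the nose: the shuffle sum introduces combinatorial factors and, because $\omega$ is not skew-symmetric, correction terms governed by $\sigma_\omega$; these must be tracked. Second, for $k\ge 4$ the symbol $\sigma_\omega$ is itself only partially skew-symmetric, with its own tower of higher symbols, so ``matching the symbol'' has to be iterated down this tower rather than performed in one step. This bookkeeping is exactly the content of \cite[Corollary 5.11]{keller-waldmann}, and since you ultimately defer to that reference for precisely this step --- as the paper itself does --- your proposal should be read as a correct strategy with the hard combinatorial core outsourced to the same citation, not as a self-contained proof.
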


Keller and Waldmann also define a degree $-2$ Poisson bracket on $C^\bullet(E)$, making it into a graded Poisson algebra \cite[Theorem 3.18]{keller-waldmann}. We will discuss this bracket further in Remark \ref{rmk:poisson}.

\subsection{Graded symplectic manifolds of degree 2}

We recall the definition of degree $2$ symplectic manifolds. See \cite{cattaneo-schatz, dmitry:graded} for more details.

A \emph{degree $2$ manifold} $\cM$ is a pair $(M, \cO)$, where $M$ is a smooth manifold and $\cO$ is a sheaf of graded commutative algebras such that, for all $p\in M$, there exists a neighborhood $U$ such that $\cO(U)$ is generated by homogeneous coordinates $\{x^i, e^i, p^i\}$, where $x^i$ are coordinates on $M$, and where $e^i$ and $p^i$ are coordinates of degree $1$ and $2$, respectively.

A degree $2$ manifold $\cM$ is \emph{symplectic} if $\cO$ has a degree $-2$ Poisson bracket such for which there exist local homogeneous coordinates $\{x^i, e^i, p^i\}$ such that $\{x^i,p^j\}$ and $\{e^i,e^j\}$ are invertible matrices; for a more detailed definition, see \cite{cattaneo-schatz}.

\begin{prop}[\cite{dmitry:graded}]\label{prop:deg2symplectic}
There is a one-to-one correspondence between symplectic degree $2$ manifolds and vector bundles equipped with a nondegenerate symmetric bilinear form.
\end{prop}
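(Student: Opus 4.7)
The plan is to construct mutually inverse maps $\Phi, \Psi$ between the two classes of objects. In one direction, given a symplectic degree $2$ manifold $\cM = (M, \cO)$, I would take $M$ as the base manifold and identify the degree $1$ piece $\cO^1$ with $\Gamma(E^*)$ for some vector bundle $E \to M$, which is well-defined because the local coordinate presentation in the definition shows $\cO^1$ is locally free of finite rank as a module over $\cO^0 = C^\infty(M)$. The degree $-2$ Poisson bracket restricts to a map $\cO^1 \times \cO^1 \to C^\infty(M)$, and this is the candidate for $\pair{\cdot}{\cdot}$. For $f \in C^\infty(M)$ and $\alpha \in \cO^1$, the bracket $\{f, \alpha\}$ lies in the (necessarily zero) degree $-1$ component, forcing $C^\infty(M)$-bilinearity. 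Graded antisymmetry of the degree $-2$ bracket on odd elements yields symmetry of the resulting form on $\Gamma(E)$, and nondegeneracy follows directly from the invertibility of $\{e^i, e^j\}$ in Darboux coordinates.

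In the opposite direction, given $(E \to M, \pair{\cdot}{\cdot})$, I would build $\cM$ by gluing local pieces. Over a coordinate chart $U \subset M$ equipped with a local frame $\{e_a\}$ of $E$, I would declare $\cO(U)$ to be the graded-commutative algebra generated by $x^i$ in degree $0$, $e^a$ in degree $1$, and $p_i$ in degree $2$, with bracket determined by $\{x^i, p_j\} = \delta^i_j$, $\{e^a, e^b\} = \pair{e^a}{e^b}$, and the remaining basic brackets zero. The key verification is that under a change of frame of $E$ preserving $\pair{\cdot}{\cdot}$ together with a change of coordinates on $M$, the induced transformation laws for the generators $e^a$ and $p_i$ are consistent with the bracket; the transition rule for $p_i$ must acquire a correction quadratic in the $e^a$ coming from the derivatives of the frame change, and this correction is what makes the construction nontrivial but well-defined.

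Checking that $\Phi \circ \Psi$ is the identity on $(E, \pair{\cdot}{\cdot})$ is essentially tautological from the construction. The main obstacle is the opposite composition: one must show that every symplectic degree $2$ manifold is isomorphic to the one built from its associated vector bundle with pairing. Locally this reduces to a Darboux-type theorem that exhibits canonical coordinates bringing the symplectic form to the normal form $dx^i \wedge dp_i + \tfrac{1}{2} g_{ab}\, de^a \wedge de^b$, and globally one needs to verify that the ambiguity in choosing such coordinates precisely matches the cocycle data for the bundle $E$ with its pairing. Rather than reprove these statements, I would appeal to the arguments of \cite{dmitry:graded, cattaneo-schatz}, whose techniques handle exactly this type of local-to-global passage.
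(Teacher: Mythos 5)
Your outline is correct, and it is worth noting that the paper does not actually prove this proposition: it cites \cite{dmitry:graded} and only sketches the correspondence. Your forward direction (recovering $M$ from the body, the bundle from the locally free $C^\infty(M)$-module $\cO^1$, and the pairing from the restriction of the degree $-2$ bracket, with nondegeneracy read off from the Darboux condition on $\{e^i,e^j\}$) is exactly what the paper describes, up to the immaterial choice of identifying $\cO^1$ with $\Gamma(E^*)$ rather than $\Gamma(E)$ via the pairing. Where you diverge is the inverse construction: you build $\cM$ by gluing local polynomial algebras, with the quadratic-in-$e^a$ correction in the transition law for the $p_i$ carrying the real content, whereas the paper gives an invariant description, setting $\cO^0 = C^\infty(M)$, $\cO^1 = \Gamma(E)$, and $\cO^2 = \Gamma(\A_E^{\pair{}{}})$ (sections of the Atiyah algebroid of skew-symmetric covariant differential operators), specifying the low-degree brackets, and invoking generation in degrees $\leq 2$ to determine the rest. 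Your local approach makes the existence of the object more concrete but pushes the work into the cocycle verification; the paper's approach avoids coordinates entirely but is, as the authors admit, ``not very explicit.'' Both routes defer the substantive verifications (the degree $2$ Darboux theorem and the bijectivity of the correspondence) to \cite{dmitry:graded}, so your appeal to that reference is consistent with the paper's own treatment.
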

Briefly, the correspondence is as follows. Given a vector bundle $E \to M$ with a nondegenerate symmetric bilinear form $\pair{\cdot}{\cdot}$, let $\cO^0 = C^\infty(M)$, $\cO^1 = \Gamma(E)$, and $\cO^2 = \Gamma(\A_E^{\pair{}{}})$. In low degrees, the nonvanishing Poisson brackets are
\begin{align*}
 \{e_1, e_2\} &= \pair{e_1}{e_2}, & \{\phi, f\} &= \sigma_\phi(f),\\
 \{\phi, e\} &= \phi(e), & \{\phi_1, \phi_2\} &= \phi_1 \phi_2 - \phi_2 \phi_1
 \end{align*}
 for $f \in \cO^0$, $e, e_1, e_2 \in \cO^1$, and $\phi, \phi_1, \phi_2 \in \cO^2$. Since $\cO$ is generated in degrees $0$, $1$, and $2$, this information is sufficient to determine $\cO$, albeit in a not very explicit way.

Comparing the descriptions of $C^k(E)$ and $\cO^k$ in low degrees, we immediately see that the two are naturally isomorphic for $k \leq 2$. The following theorem shows that this isomorphism extends to all degrees.

\begin{thm}\label{thm:isomorphism}
	Let $E\to M$ be a vector bundle of positive rank, equipped with a nondegenerate symmetric bilinear form. Let $\cM$ be the corresponding symplectic degree $2$ manifold. Then the map $\Upsilon: \cO^k \to C^k(E)$, given by
\[ \Upsilon(\psi)(e_1,\dots,e_k)=\{e_k,\{e_{k-1},\dots,\{e_1,\psi\}\cdots\}, \]
is an isomorphism of graded commutative algebras.
\end{thm}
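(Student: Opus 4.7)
The plan is to verify three things in turn: (i) $\Upsilon(\psi)$ actually lies in $C^k(E)$, (ii) $\Upsilon$ is a homomorphism of graded commutative algebras, and (iii) $\Upsilon$ is a bijection in each degree. The two main tools are the graded Jacobi and Leibniz identities for the Poisson bracket on $\cO$, together with Proposition \ref{KWcomplex}.

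For (i), I would first observe the $C^\infty(M)$-linearity in the last entry by a degree count: the inner element $\Xi := \{e_{k-1}, \ldots, \{e_1, \psi\} \cdots\}$ lies in $\cO^1 = \Gamma(E)$, and in the Leibniz expansion $\{fe_k, \Xi\} = f\{e_k, \Xi\} + \{f, \Xi\}\, e_k$ the term $\{f, \Xi\}$ lives in degree $-1$ and is therefore zero. For the symbol condition, I would invoke graded Jacobi on two consecutive degree-$1$ arguments,
\[
\{e_{i+1}, \{e_i, X\}\} + \{e_i, \{e_{i+1}, X\}\} = \{\pair{e_i}{e_{i+1}}, X\},
\]
with $X := \{e_{i-1}, \ldots, \{e_1, \psi\} \cdots\}$. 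Since $\{g, e\} = 0$ whenever $g \in C^\infty(M)$ and $e \in \Gamma(E)$, a further sequence of Jacobi applications lets the scalar $\pair{e_i}{e_{i+1}}$ be commuted past each of $\{e_{i-1}, -\}, \ldots, \{e_1, -\}$ down to the innermost slot. This identifies the symbol as
\[
\sigma_\omega(a_1, \ldots, a_{k-2})(f) := \{a_{k-2}, \ldots, \{a_1, \{f, \psi\}\} \cdots\},
\]
which is visibly a derivation in $f$, hence a vector field.

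For (ii), I would proceed by induction on $k + m$, with $|\psi| = k$ and $|\tau| = m$. Applying graded Leibniz $\{e, \psi\tau\} = \{e, \psi\}\tau + (-1)^{|\psi|}\psi\{e, \tau\}$ at each of the $k+m$ outer brackets and collecting terms according to which of the two factors each $e_j$ acts on yields a signed sum indexed by partitions of $\{1, \ldots, k+m\}$ into blocks of sizes $k$ and $m$. Reordering the nested brackets into canonical order via graded commutativity in $\cO$ produces precisely the sign $\sgn(\pi)$ of the corresponding shuffle, so the result matches the shuffle-product definition of $\Upsilon(\psi) \cupprod \Upsilon(\tau)$.

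For (iii), direct inspection shows that in degrees $\leq 2$, $\Upsilon$ coincides with the identifications recorded immediately before the theorem, e.g.\ $\Upsilon(e)(e') = \pair{e'}{e}$ and $\Upsilon(\phi)(e, e') = \pair{\widehat{\phi}(e)}{e'}$, so it is an isomorphism in those degrees. Combined with (ii) and Proposition \ref{KWcomplex}, this yields surjectivity. For injectivity, I would argue locally on a trivializing chart where $\pair{\cdot}{\cdot}$ has constant coefficient matrix: then $\cO^k$ is freely spanned over $C^\infty$ by weight-$k$ monomials in the degree-$1$ and degree-$2$ coordinates, and evaluating $\Upsilon$ on distinct monomials at elements of the local frame produces manifestly linearly independent cochains. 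The main obstacle I anticipate is the sign bookkeeping in step (ii): matching the Koszul signs arising from graded commutativity in $\cO$ with the signs $\sgn(\pi)$ appearing in the shuffle-product formula requires careful combinatorial tracking, though it is conceptually routine.
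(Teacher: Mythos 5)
Your steps (i), (ii), and the surjectivity argument track the paper's proof closely: $C^\infty(M)$-linearity in the last slot by a degree count, the symbol condition via graded Jacobi on consecutive degree-$1$ arguments (your symbol $\{a_{k-2},\dots,\{a_1,\{f,\psi\}\}\cdots\}$ agrees with the paper's $\{f,\{a_{k-2},\dots,\{a_1,\psi\}\cdots\}$ because $\{f,e\}=0$ for degree reasons), the shuffle identity by iterated Leibniz, and surjectivity from Proposition \ref{KWcomplex} together with the degree $\leq 2$ identifications. All of that is fine.

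The gap is in your injectivity argument. In local Darboux coordinates $(x^i,\xi^a,p_j)$ one has $\{\xi^a,p_j\}=0$ and $\{\xi^a,f\}=0$, so for any monomial $\psi=f\,\xi^I p_J$ with $J\neq\emptyset$ the iterated bracket $\{\xi^{a_k},\dots,\{\xi^{a_1},\psi\}\cdots\}$ vanishes: the derivations $\{\xi^a,\cdot\}$ exhaust the $\xi$-factors and then annihilate what remains. Concretely, $\Upsilon(p_j)$ evaluates to zero on every pair of frame elements, yet $p_j\neq 0$; its content is carried entirely by the symbol, i.e.\ it is only detected by evaluating on sections $f\xi^a$ with non-constant $f$ (indeed $\{f\xi^a,p_j\}=\pm(\partial f/\partial x^j)\xi^a$). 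So "evaluating on distinct monomials at elements of the local frame" does not produce linearly independent cochains and cannot separate the $p$-dependent part of $\psi$ from zero. To repair this you must test against general sections, not just a frame — for instance by the paper's route: induct on $k$, using $\Upsilon(\{e,\psi\})=\iota_e\Upsilon(\psi)$ to reduce to showing that $\{e,\psi\}=0$ for \emph{all} $e\in\Gamma(E)$ forces $\psi=0$; there the brackets $\{f\xi^a,\psi\}$ give access to $\partial\psi/\partial p_j$ as well as $\partial\psi/\partial\xi^b$, which is exactly what your frame-only test misses.
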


\begin{proof}
First we show that $\Upsilon$ is well-defined. Since
\begin{align*}
\Upsilon(\psi)(e_1,\dots, fe_k)& = \{fe_k,\{e_{k-1},\dots,\{e_1,\psi\}\cdots \} \\
&=f\{e_k,\{e_{k-1},\dots,\{e_1,\psi\}\cdots \} \\
&=f\Upsilon(\psi)(e_1,\dots, e_k)
\end{align*}
for $\psi \in \cO^k$, $f \in C^\infty(M)$, $e_i \in \Gamma(E)$, we see that $\Upsilon$ is $C^\infty(M)$-linear in the last entry.

Let $\sigma_{\Upsilon(\psi)}$ be given by
\[ \sigma_{\Upsilon(\psi)}(e_1, \dots, e_{k-2})(f) = \{f,\{e_{k-2}, \dots, \{e_1, \psi\} \cdots \} . \]
Then, using the Jacobi identity and the fact that certain brackets vanish by degree considerations, we have
\begin{align*}
&\Upsilon(\psi)(e_1,\dots, e_i, e_{i+1},\dots, e_k)+\Upsilon(\psi)(e_1,\dots, e_{i+1}, e_{i},\dots, e_k)\\
&=	\{e_k,\dots,\{e_{i+1},\{e_{i},\dots,\{e_1,\psi\}\cdots\}+\{e_k,\dots,\{e_{i},\{e_{i+1},\dots,\{e_1,\psi\}\cdots\}\\
&=	\{e_k,\dots,\{\{e_{i+1},e_i\},\{e_{i-1},\dots,\{e_1,\psi\}\cdots\}\\
&= \{\{e_{i+1},e_i\},\{e_k,\dots,\{e_1,\psi\}\cdots\}\\
&=	\sigma_{\Upsilon(\psi)}(e_1,\dots,\widehat{e}_i,\widehat{e}_{i+1},\dots, e_k)(\pair{e_i}{e_{i+1}}).
\end{align*}
This shows that $\Upsilon(\psi)$ is indeed an element of $C^k(E)$, so $\Upsilon$ is well-defined.

For $\psi \in \cO^k$, $\eta \in \cO^m$, repeated application of the Leibniz rule gives
\begin{align*}
	&\Upsilon(\psi \eta)(e_1,\dots,e_{k+m})=\{e_{k+m},\dots,\{e_1,\psi \eta\}\} \\
	& =\{e_{k+m},\dots,\{e_1, \psi\}\eta+(-1)^{|\psi|}\psi\{e_1,\eta\}\} \\
		&= \sum_{\pi\in \Sh(k,m)}\sgn(\pi)\{e_{\pi(k)},\dots,\{e_{\pi(1)},\psi\}\}\{e_{\pi(k+m)},\dots,\{e_{\pi(k+1)},\eta\}\} \\
		&= \sum_{\pi\in \Sh(k,m)}\sgn(\pi)\Upsilon(\psi)(e_{\pi(1)},\dots,e_{\pi(k)})\Upsilon(\eta)(e_{\pi(k+1)},\dots,e_{\pi(k+m)}) \\
		&= (\Upsilon(\psi)\cupprod\Upsilon(\eta))(e_1,\dots, e_{k+m}).
\end{align*}
This shows that $\Upsilon$ is a morphism of algebras. By Proposition \ref{KWcomplex} and the fact that $\Upsilon$ is an isomorphism in degrees $k \leq 2$, we deduce that $\Upsilon$ is onto.

It can be shown that $\Upsilon$ is one-to-one by induction on $k$. We have already established the base cases $k \leq 2$. Suppose that $\psi \in \cO^k$ is such that $\Upsilon(\psi) = 0$. Then, for any $e, e_1, \dots, e_{k-1} \in \Gamma(E)$, we see that
\[ \Upsilon(\{e,\psi\})(e_1, \dots, e_{k-1}) = \Upsilon(\psi)(e, e_1, \dots, e_{k-1}) = 0,\]
so by the inductive hypothesis we have $\{e, \psi\} = 0$ for all $e \in \Gamma(E)$. By a local coordinate argument, it follows that $\psi = 0$. 
\end{proof}

\begin{remark}\label{rmk:poisson}
The isomorphism $\Upsilon$ allows us to transfer the degree $-2$ Poisson bracket from $\cO$ to $C^\bullet(E)$. By repeated application of the Jacobi identity, one could derive an explicit formula for $[\omega, \tau](e_1, \dots e_{k+m-2})$, where $\omega \in C^k(E)$, $\tau \in C^m(E)$, as sums over (un)shuffles of terms involving $\omega$ and $\tau$. For the purposes of this paper, such a formula is not needed.
\end{remark}

\begin{remark}\label{split}
Keller and Waldmann \cite{keller-waldmann} also consider the \emph{Rothstein algebra} 
\[ \mathcal{R}(E) = \sym(\vect(M)) \otimes \bigwedge \Gamma(E).\] 
They show that, given a choice of metric connection on $E$, one can put a degree $-2$ Poisson bracket on $\mathcal{R}(E)$, and that, under hypotheses that always hold in the smooth case, there is an isomorphism $\mathcal{R}(E) \isoto C(E)$. This isomorphism depends on the choice of connection, and it is constructed using iterated brackets similar to those in the proof of Theorem \ref{thm:isomorphism}.

It is also well-known from the perspective of graded geometry \cite{dmitry:graded} that $\mathcal{R}(E)$ is noncanonically isomorphic to $\cO$, with the isomorphism depending on a choice of connection. The result of Theorem \ref{thm:isomorphism} thus closes the diagram with a \emph{canonical} isomorphism $\cO \isoto C(E)$.
\end{remark}

\begin{example}\label{ex-VE}
Consider the case where $M$ is a point, so that $E=V$ is a vector space equipped with a nondenerate symmetric bilinear form. In this case, the Keller-Waldmann cochains are necessarily skew-symmetric, so
\[ C(V) = \bigwedge V^*.\]
Although the bilinear form does not affect the algebra structure in this case, it is used to define the degree $-2$ Poisson bracket.
\end{example}

\begin{example}\label{ex-aa}
Let $A \to M$ be a vector bundle, and let $E = A \oplus A^*$ with the bilinear form
\[ \pair{X_1 + \xi_1}{X_2 + \xi_2} = \xi_1(X_2) + \xi_2(X_1).\]
In this case, the Keller-Waldmann algebra can be identified with the algebra of polyderivations of $\mathcal{A} = \bigwedge \Gamma(A^*)$, with an appropriate degree shift:
\[ C(E) \cong \sym\left( \Der(\mathcal{A})[-2]\right).\]
Here, the symmetric product is taken as a module over $\bigwedge \Gamma(A^*)$. We give here a brief sketch of this correspondence, leaving some details to the reader.

For $\beta \in \mathcal{A}_k = \bigwedge^k \Gamma(A^*)$, the corresponding $k$-cochain $c(\beta) \in C^k(E)$ is given by
\[ c(\beta)(X_1 + \xi_1, \dots, X_k + \xi_k) = \beta(X_1, \dots , X_k).\]

The degree $-1$ derivations of $\mathcal{A}$ are the contraction operators $\iota_Y$, where $Y \in \Gamma(A)$. The corresponding $1$-cochain $I_Y \in C^1(E)$ is given by 
\[I_Y(X+\xi) = \xi(Y).\]

The degree $0$ derivations of $\mathcal{A}$ correspond to covariant differential operators on $A$. Given a CDO $\phi$ with symbol $\sigma_\phi$, the corresponding $2$-cochain $\widetilde{\phi} \in C^2(E)$ is given by 
\[ \widetilde{\phi}(X_1 + \xi_1, X_2 + \xi_2) = \xi_2(\phi(X_1)) - \xi_1(\phi(X_2)) + \sigma_\phi(\xi_1(X_2)).\]
Since $\Der(\mathcal{A})$ is generated as an $\mathcal{A}$ module in degrees $-1$ and $0$, the above correspondences are sufficient to determine the isomorphism $C(E) \cong \sym\left( \Der(\mathcal{A})[-2]\right)$. 

In light of Theorem \ref{thm:isomorphism}, the above correspondence agrees with the graded geometry perspective, where the symplectic degree $2$ manifold is $T^*[2]A[1]$, whose functions are polyvector fields on $A[1]$.
\end{example}

\begin{example}\label{ex-ST}
An important special case of Example \ref{ex-aa} is when $E = TM \oplus T^*M$. Then
\[ C(E) \cong \sym\left(\Der(\Omega(M))[-2]\right).\]
Furthermore, using the results of \cite{fro:nij} on the algebra of derivations of $\Omega(M)$ we can obtain an isomorphism
\[  C^{k}(E)\cong \bigoplus_{i+j+2l=k}\Omega^i\left(M;\bigwedge\nolimits^ jTM\otimes \sym^l TM\right).\]
We note that the iterated bracket construction combined with the nonlinear isomorphism between derivations and vector valued forms makes this isomorphism highly non-trivial, albeit canonical.
\end{example}

\subsection{Courant algebroids}

\begin{definition}\label{courant}
A \emph{Courant algebroid} is a vector bundle $E \to M$ equipped with a nondegenerate symmetric bilinear form $\pair{\cdot}{\cdot}$, a bundle map $\rho: E \to TM$ (called the \emph{anchor}), and a bracket $\cbrack{\cdot}{\cdot}$
(called the \emph{Courant bracket}) such that
\begin{enumerate}[(C1)]
\item $\cbrack{e_1}{fe_2} = \rho(e_1)(f) e_2 + f\cbrack{e_1}{e_2}$,
\item $\rho(e_1)(\pair{e_2}{e_3}) = \pair{\cbrack{e_1}{e_2}}{e_3} + \pair{e_2}{\cbrack{e_1}{e_3}}$,
\item $\cbrack{\cbrack{e_1}{e_2}}{e_3} = \cbrack{e_1}{\cbrack{e_2}{e_3}} - \cbrack{e_2}{\cbrack{e_1}{e_3}}$,
\item $\cbrack{e_1}{e_2} + \cbrack{e_2}{e_1} = \cD \pair{e_1}{e_2}$,
\end{enumerate}
for all $f \in C^\infty(M)$ and $e_i \in \Gamma(E)$, where $\cD : C^\infty(M) \to \Gamma(E)$ is defined by
\begin{equation*}
\pair{\cD f}{e} = \rho(e)(f).
\end{equation*}
\end{definition}
Note that we are using the ``Dorfman convention'', where the bracket is not skew-symmetric (at least if $\rho$ is nonzero), but where a Jacobi identity (axiom (C3)) holds.

\begin{remark}\label{rmk:anchor}
The following identities, which were axioms in the original definition of Courant algebroid \cite{lwx:courant}, are consequences of the axioms in Definition \ref{courant}:
\begin{enumerate}[({A}1)]
    \item $\rho(\cbrack{e_1}{e_2}) = [\rho(e_1),\rho(e_2)]$ for all $e_1, e_2 \in \Gamma(E)$,
    \item $\rho \circ \cD = 0$.
\end{enumerate}
See \cite{uchino} for a discussion of the dependencies among the axioms.
\end{remark}	

Let $E \to M$ be a Courant algebroid. Since $E$ is, in particular, a vector bundle equipped with a nondegenerate symmetric bilinear form, there is by Proposition \ref{prop:deg2symplectic} an associated symplectic degree $2$ manifold $\cM$. The following theorem, due to \v{S}evera \cite{severa:some} and Roytenberg \cite{dmitry:graded}, goes further to state that the anchor and bracket can be completely encoded in a certain degree $3$ function on $\cM$.
\begin{thm}[\cite{dmitry:graded, severa:some}]\label{thm:deg3}
Let $E \to M$ be a vector bundle equipped with a nondegenerate symmetric bilinear form, with corresponding symplectic degree $2$ manifold $\cM$. Then there is a one-to-one correspondence between anchor-and-bracket data $(\rho, \cbrack{\cdot}{\cdot})$ satisfying the axioms of Definition \ref{courant} and degree $3$ functions $\theta \in \cO^3$ such that $\{\theta,\theta\} = 0$.
\end{thm}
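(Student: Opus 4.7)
The natural approach is the \emph{derived bracket} construction. Given $\theta \in \cO^3$ with $\{\theta,\theta\}=0$, I would define
\[
  \rho(e)(f) := \{\{e,\theta\},f\}, \qquad \cbrack{e_1}{e_2} := \{\{e_1,\theta\},e_2\},
\]
for $e, e_1, e_2 \in \cO^1 = \Gamma(E)$ and $f \in \cO^0 = C^\infty(M)$. A degree count shows $\cbrack{e_1}{e_2} \in \cO^1 = \Gamma(E)$, and $\rho(e)$ is a derivation of $C^\infty(M)$ (namely the symbol of the skew-symmetric CDO $\{e,\theta\} \in \cO^2$), so these formulas define bundle maps of the expected type. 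One also checks $\cD f = \{f,\theta\}$.

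Next I would verify the four Courant axioms using only Leibniz, graded skew-symmetry, and graded Jacobi for the degree $-2$ Poisson bracket on $\cO$: (C1) is an immediate application of Leibniz; (C2) follows by applying Jacobi to $\rho(e_1)\pair{e_2}{e_3}=\{\{\{e_1,\theta\},e_2\},e_3\}$ and recognising the two derived brackets that appear; (C4) follows from $\{e_1,e_2\}=\pair{e_1}{e_2}$ together with the graded skew-symmetry of the Poisson bracket, which forces $\cbrack{e_1}{e_2}+\cbrack{e_2}{e_1}=\cD\pair{e_1}{e_2}$. For the inverse direction, given Courant data $(\rho,\cbrack{\cdot}{\cdot})$, I would define $\theta := \Upsilon^{-1}(\omega)$, where $\omega(e_1,e_2,e_3):=\pair{\cbrack{e_1}{e_2}}{e_3}$. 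Using Theorem \ref{thm:isomorphism}, the only nontrivial point is to exhibit $\omega$ as a genuine Keller-Waldmann cochain: $C^\infty(M)$-linearity in $e_3$ is clear from bilinearity of the pairing, while the symbol condition reduces to axiom (C2) for the swap $(e_2,e_3)$ and to axiom (C4) for the swap $(e_1,e_2)$; both computations yield symbol $\sigma_\omega=\rho$, consistently. A short check shows that the two constructions are mutually inverse, since by construction $\Upsilon(\theta)(e_1,e_2,e_3)=\{e_3,\{e_2,\{e_1,\theta\}\}\}=\pair{\cbrack{e_1}{e_2}}{e_3}$.

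The main obstacle is the equivalence $\{\theta,\theta\}=0 \iff$ axiom (C3). The key calculation applies graded Jacobi three times to the quadruply-iterated bracket
\[
  \Upsilon(\{\theta,\theta\})(e_1,e_2,e_3,e_4) \;=\; \{e_4,\{e_3,\{e_2,\{e_1,\{\theta,\theta\}\}\}\}\},
\]
rewriting the nested inner brackets $\{e_i,\{\theta,\theta\}\}=2\{\{e_i,\theta\},\theta\}$ and repeatedly pushing $\theta$'s past sections. Each application of Jacobi either produces a derived bracket of two sections or a further nested expression, and after a careful bookkeeping one identifies the result (up to a nonzero constant and terms that vanish by nondegeneracy of the pairing) with the Dorfman Jacobiator $\cbrack{\cbrack{e_1}{e_2}}{e_3}-\cbrack{e_1}{\cbrack{e_2}{e_3}}+\cbrack{e_2}{\cbrack{e_1}{e_3}}$ paired with $e_4$. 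Since the pairing is nondegenerate and $\Upsilon$ is an isomorphism (Theorem \ref{thm:isomorphism}), this shows that $\{\theta,\theta\}=0$ if and only if (C3) holds. Bijectivity of the correspondence is then automatic, completing the proof.
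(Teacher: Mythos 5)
The paper does not actually prove this theorem: it is quoted from \cite{dmitry:graded, severa:some}, and the paper merely records the derived-bracket formulas \eqref{eqn:derivedanchor}--\eqref{eqn:derivedbracket} in Remark \ref{couformulas} and the resulting $3$-cochain in Corollary \ref{cor:T}. Your sketch reconstructs the standard Roytenberg argument, and it is consistent with exactly those formulas, so there is no conflict of approach; the outline is sound. Two remarks. First, your construction of $\theta$ in the converse direction as $\Upsilon^{-1}(\omega)$ with $\omega(e_1,e_2,e_3)=\pair{\cbrack{e_1}{e_2}}{e_3}$ is a legitimate and rather clean shortcut here, since Theorem \ref{thm:isomorphism} precedes this statement in the paper's logical order; it replaces the local-coordinate construction of $\theta$ that the original references use, and your verification that $\omega$ is a Keller--Waldmann cochain (symbol $\sigma_\omega=\rho$ in both slots, via (C2) and (C4)) is the right check. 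Second, the only soft spot is the crux you identify yourself: the identification of $\Upsilon(\{\theta,\theta\})(e_1,e_2,e_3,e_4)$ with (twice) the Dorfman Jacobiator paired against $e_4$ is asserted ``after a careful bookkeeping'' rather than carried out, and the sign conventions for the degree $-2$ bracket make that bookkeeping genuinely delicate (note, e.g., that $\{e_2,\{e_1,\theta\}\}=\{\{e_1,\theta\},e_2\}$ while $\{f,\{e,\theta\}\}=-\{\{e,\theta\},f\}$ in the paper's conventions, so the symmetry rule is degree-dependent). Your hedge about ``terms that vanish by nondegeneracy'' should in fact be unnecessary --- the computation closes up exactly on the Jacobiator --- but since the conclusion only needs vanishing for all $e_4$, the argument goes through either way. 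No conceptual gap; only that one computation would need to be written out to make this a complete proof.
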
	

\begin{remark}\label{couformulas}
The correspondence in Theorem \ref{thm:deg3} is given by the following derived bracket formulas (see \cite{dmitry:graded}):
\begin{align}
\rho(e)(f) &= \{\{e, \theta\}, f\} = -\{f, \{e, \theta\}\}, \label{eqn:derivedanchor} \\
\cbrack{e_1}{e_2} &= \{\{e_1,\theta\},e_2\} = \{e_2,\{e_1,\theta\}\}, \label{eqn:derivedbracket}
\end{align}
for $e, e_1, e_2 \in \cO^1 = \Gamma(E)$ and $f \in \cO^0 = C^\infty(M)$. As a consequence, we have the following corollary.
\end{remark}
\begin{cor}\label{cor:T}
Let $E \to M$ be a Courant algebroid with associated symplectic degree $2$ manifold $\cM$ and degree $3$ function $\theta$. Let $T = \Upsilon(\theta) \in C^3(E)$. Then $T$ is given by
		\begin{equation*}
			T(e_1,e_2,e_3)=\pair{\cbrack{e_1}{e_2}}{e_3}.
		\end{equation*}
\end{cor}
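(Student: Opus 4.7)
The plan is to unwind the definition of $\Upsilon$ and apply the derived bracket formulas from Remark \ref{couformulas}. Starting from the definition
\[ T(e_1,e_2,e_3) = \Upsilon(\theta)(e_1,e_2,e_3) = \{e_3,\{e_2,\{e_1,\theta\}\}\}, \]
I would first peel off the innermost two brackets. Since $\{e_1,\theta\}$ has degree $3+1-2=2$, the derived bracket formula \eqref{eqn:derivedbracket} immediately identifies
\[ \{e_2,\{e_1,\theta\}\} = \cbrack{e_1}{e_2} \in \cO^1 = \Gamma(E). \]

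Next, I would invoke the Poisson bracket between two degree $1$ elements, which the correspondence in Proposition \ref{prop:deg2symplectic} identifies with the bilinear form: for any $e, e' \in \Gamma(E)$, $\{e, e'\} = \pair{e}{e'}$. Applying this to $e_3$ and $\cbrack{e_1}{e_2}$ gives
\[ T(e_1,e_2,e_3) = \{e_3, \cbrack{e_1}{e_2}\} = \pair{e_3}{\cbrack{e_1}{e_2}} = \pair{\cbrack{e_1}{e_2}}{e_3}, \]
where the last equality uses symmetry of the pairing.

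There is no real obstacle here — the statement is a corollary in the precise sense that it follows by substituting $\theta$ into the formula for $\Upsilon$ and then applying the two derived bracket identities of Remark \ref{couformulas} in succession. The only subtlety is keeping track of degrees to confirm that the intermediate object $\{e_2,\{e_1,\theta\}\}$ is indeed a section of $E$, so that pairing it against $e_3$ via the degree $1$--degree $1$ Poisson bracket yields a function rather than something of higher degree.
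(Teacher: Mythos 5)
Your proof is correct and follows exactly the route the paper intends: the corollary is stated as an immediate consequence of the derived bracket formulas in Remark \ref{couformulas}, and your unwinding of $\Upsilon(\theta)(e_1,e_2,e_3)=\{e_3,\{e_2,\{e_1,\theta\}\}\}$ via \eqref{eqn:derivedbracket} and the degree-$1$ Poisson bracket $\{e,e'\}=\pair{e}{e'}$ is precisely that argument, with the degree bookkeeping correctly verified.
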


\subsection{Cartan calculus}

Let $E \to M$ be a Courant algebroid with associated symplectic degree $2$ manifold $\cM$ and degree $3$ function $\theta$. Since $\theta$ satisfies the classical master equation $\{\theta,\theta\}=0$, the operator $d_E=\{\theta,\cdot\}$ defines a differential on $\cO$. The \emph{standard complex} of a Courant algebroid is defined to be the cochain complex $(\cO, d_E)$.

For any section $e \in \Gamma(E)$, we can introduce the ``contraction'' and ``Lie derivative'' operators 
\begin{align*}
\iota_e &= \{e,\cdot\}, & \lie_e &=  \{\{e,\theta\},\cdot\},
\end{align*}
on $\cO$. The contractions, Lie derivatives, and differential satisfy many relations that are formally identical to the Cartan relations in de Rham theory.
\begin{prop}\label{prop:cartan}
The following graded commutation relations hold for all $e, e' \in \Gamma(E)$:
\begin{align*}
d_E^2 &= 0, & [\lie_e, d_E] & = 0,\\
[\iota_e, d_E] &= \lie_e, & [\lie_{e}, \lie_{e'}] &= \lie_{\cbrack{e}{e'}},\\
[\lie_{e}, \iota_{e'}] &= \iota_{\cbrack{e}{e'}}
\end{align*}
\end{prop}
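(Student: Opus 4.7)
My plan is to observe that all the operators in question are defined as iterated Poisson brackets with fixed elements of $\cO$, so each of the five identities becomes a formal consequence of the graded Jacobi identity for the degree $-2$ Poisson bracket on $\cO$ (transported along the isomorphism $\Upsilon$ of Theorem \ref{thm:isomorphism}) together with the master equation $\{\theta,\theta\}=0$. This is the standard ``derived bracket'' yoga.

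First I would establish $d_E^2 = 0$: applying graded Jacobi to $\{\theta,\{\theta,\psi\}\}$ and invoking graded skew-symmetry of $\{\cdot,\cdot\}$ on the pair $(\theta,\theta)$ gives $2\{\theta,\{\theta,\psi\}\} = \{\{\theta,\theta\},\psi\} = 0$. Next, $[\iota_e, d_E] = \lie_e$ is the prototypical derived bracket formula: expanding the graded commutator of a degree $-1$ and a degree $+1$ operator produces $\{e,\{\theta,\psi\}\} + \{\theta,\{e,\psi\}\}$, and graded Jacobi identifies this with $\{\{e,\theta\},\psi\} = \lie_e\psi$. For $[\lie_e, d_E]=0$, I would note that once $\lie_e = [\iota_e, d_E]$ is in hand, the graded Jacobi identity for commutators of endomorphisms gives $[\lie_e, d_E] = [[\iota_e, d_E], d_E]$, which vanishes because $[d_E, d_E] = 2 d_E^2 = 0$. (Equivalently, a direct double application of graded Jacobi collapses the commutator to $\{\{\{e,\theta\},\theta\},\psi\}$, and $\{\{e,\theta\},\theta\}$ vanishes because it is proportional to $\{e,\{\theta,\theta\}\}$.)

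The remaining two identities follow the same pattern. For $[\lie_e, \iota_{e'}] = \iota_{\cbrack{e}{e'}}$, graded Jacobi on $\{\{e,\theta\},\{e',\psi\}\}$ rearranges the commutator to $\{\{\{e,\theta\},e'\},\psi\}$, which equals $\{\cbrack{e}{e'},\psi\} = \iota_{\cbrack{e}{e'}}\psi$ by the derived bracket formula \eqref{eqn:derivedbracket}. For $[\lie_e, \lie_{e'}] = \lie_{\cbrack{e}{e'}}$, graded Jacobi first reduces the commutator to $\{\{\{e,\theta\},\{e',\theta\}\},\psi\}$; a further application of graded Jacobi, combined with the vanishing $\{\{e,\theta\},\theta\}=0$ established above, gives $\{\{e,\theta\},\{e',\theta\}\} = \{\cbrack{e}{e'},\theta\}$, so the expression becomes $\lie_{\cbrack{e}{e'}}\psi$.

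The only genuine obstacle is bookkeeping the signs. The Poisson bracket has degree $-2$, so its graded skew-symmetry and Jacobi identity carry shifted-degree signs, and the operators $\iota_e$, $\lie_e$, $d_E$ sit in degrees $-1$, $0$, $+1$ respectively, so each graded commutator has its own sign. Once a consistent convention is fixed, all five identities match up on the nose; none of the computations involves any genuinely Courant-specific input beyond $\{\theta,\theta\}=0$ and the derived-bracket formulas of Remark \ref{couformulas}.
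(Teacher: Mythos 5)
Your proposal is correct and follows exactly the paper's approach: the paper proves these identities by applying the graded Jacobi identity together with the derived-bracket formulas \eqref{eqn:derivedanchor}--\eqref{eqn:derivedbracket} and the master equation $\{\theta,\theta\}=0$, working out $[\lie_e,\iota_{e'}]=\iota_{\cbrack{e}{e'}}$ explicitly and leaving the other four as exercises. You have simply carried out those exercises, and your sign bookkeeping (e.g.\ deriving $\{\{e,\theta\},\theta\}=0$ from $\{e,\{\theta,\theta\}\}=0$ and reducing $\{\{e,\theta\},\{e',\theta\}\}$ to $\{\cbrack{e}{e'},\theta\}$) checks out.
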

\begin{proof}
The proofs involve using the Jacobi identity and known identities for Poisson brackets involving $\theta$ and sections of $E$. For example, using the Jacobi identity and \eqref{eqn:derivedbracket}, we have
\begin{align*}
[\lie_{e}, \iota_{e'}] &= \lie_{e} \iota_{e'} - \iota_{e'} \lie_{e} \\
&= \{\{e,\theta\}, \{e', \cdot\}\} - \{e', \{\{e,\theta\}, \cdot \}\} \\
&= \{\{\{e, \theta\}, e'\}, \cdot \} \\
&= \iota_{\cbrack{e}{e'}}.
\end{align*}
We leave the other relations as exercises for the reader.
\end{proof}
\begin{remark}
It should be emphasized that contraction operators do not anticommute with each other, in contrast to the case of de Rham theory. Specifically,
\begin{align*}
[\iota_{e},\iota_{e'}] &= \{e, \{e', \cdot\}\} + \{e', \{e, \cdot\}\}\\
&= \{\{e,e'\}, \cdot\},
\end{align*}
which does not vanish in general. This situation could be handled by introducing additional degree $-2$ contraction operators $\iota_f = \{f, \cdot\}$ for $f \in C^\infty(M)$ and using them to extend the Cartan calculus. This idea is closely connected to the $L_\infty$-algebra structure associated to a Courant algebroid \cite{roytenberg-weinstein}. Rather than considering this, we will take the simpler approach of not trying to exchange contraction operators.
\end{remark}

The result of Theorem \ref{thm:isomorphism} allows us to use the isomorphism $\Upsilon$ to transfer the operators $\iota_e$, $\lie_e$, and $d_E$ to the Keller-Waldmann complex $C^k(E)$. The advantage is that we can now interpret $\iota_e$ as a contraction operator in the usual sense, since
\[ (\iota_e \omega)(e_1, \dots, e_{k-1}) = \omega(e, e_1, \dots, e_{k-1})\]
for $\omega \in C^k(E)$.
\begin{thm}\label{thm:cartan}
For $\omega \in C^k(E)$ and $e \in \Gamma(E)$, $\lie_e \omega$ and $d_E \omega$ are given by the following formulas:
\begin{align*}
(\lie_e \omega)(e_1, \dots, e_k) =& \rho(e)\left(\omega(e_1, \dots, e_k)\right) - \sum_{i=1}^k \omega(e_1, \dots, e_{i-1}, \cbrack{e}{e_i}, e_{i+1}, \dots, e_k),\\
(d_E \omega)(e_0, \dots, e_k) =& \sum_{i=0}^k (-1)^i \rho(e_i)\omega(e_0, \dots, \widehat{e}_i, \dots, e_k)  \\
&- \sum_{i < j} (-1)^i \omega(e_0, \dots, \widehat{e}_i, \dots, e_{j-1}, \cbrack{e_i}{e_j}, e_{j+1}, \dots, e_k).
\end{align*}
\end{thm}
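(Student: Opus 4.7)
The plan is to use the isomorphism $\Upsilon$ of Theorem \ref{thm:isomorphism} to transport the operators from $\cO$ to $C^\bullet(E)$ and then compute via repeated application of the graded Jacobi identity, using the derived bracket formulas \eqref{eqn:derivedanchor} and \eqref{eqn:derivedbracket}.

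For the $\lie_e$ formula, I would start from the fact that $\omega(e_1,\dots,e_k) \in C^\infty(M)$ is represented, via $\Upsilon$, by the iterated Poisson bracket $\{e_k,\dots,\{e_1,\psi\}\cdots\}$ with $\psi = \Upsilon^{-1}(\omega)$. Applying $\rho(e) = \{\{e,\theta\},\cdot\}$ to this scalar and moving $\{e,\theta\}$ inward through each of the outer brackets using the graded Jacobi identity produces, at each step, one term where $\{e,\theta\}$ has bracketed with $e_i$, giving $\{\{e,\theta\},e_i\} = \cbrack{e}{e_i}$, and one term where $\{e,\theta\}$ has been pushed further in. (The sign bookkeeping is clean because $\{e,\theta\}$ has shifted degree $0$ for the degree $-2$ bracket, so no Koszul signs intervene.) Iterating $k$ times yields
\[
\rho(e)\omega(e_1,\dots,e_k) = \sum_{i=1}^k \omega(e_1,\dots,\cbrack{e}{e_i},\dots,e_k) + \bigl\{e_k,\dots,\{e_1,\{\{e,\theta\},\psi\}\}\cdots\bigr\},
\]
and the last term is $(\lie_e\omega)(e_1,\dots,e_k)$ by definition of the transferred $\lie_e$. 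Rearranging gives the first formula.

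For $d_E$, I would proceed by induction on $k$. The base case $k=0$ reduces to $(d_E f)(e_0) = \{e_0,\{\theta,f\}\} = \{\{e_0,\theta\},f\} = \rho(e_0)(f)$. For the inductive step, use the Cartan relation $\lie_{e_0} = \iota_{e_0}d_E + d_E\iota_{e_0}$ from Proposition \ref{prop:cartan} to write
\[
(d_E\omega)(e_0,\dots,e_k) = (\iota_{e_0}d_E\omega)(e_1,\dots,e_k) = (\lie_{e_0}\omega)(e_1,\dots,e_k) - (d_E\iota_{e_0}\omega)(e_1,\dots,e_k).
\]
The first term is now known by the preceding paragraph, and since $\iota_{e_0}\omega$ is a $(k-1)$-cochain, the second is given by the Cartan formula by induction. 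Substituting the two expressions and carefully reindexing produces exactly the claimed formula: the $i=0$ contribution of $\rho$-terms and $\cbrack{e_0}{e_j}$-terms comes from $\lie_{e_0}\omega$, while the $i\geq 1$ contributions come from $-d_E\iota_{e_0}\omega$ with the sign shift $(-1)^{i-1}\mapsto-(-1)^i$.

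The main obstacle is the combinatorial bookkeeping in the inductive step: verifying that the signs and index ranges from $(\lie_{e_0}\omega) - (d_E\iota_{e_0}\omega)$ glue together without gaps or overlaps into the sums $\sum_{i=0}^k$ and $\sum_{i<j}$ of the target formula. A secondary subtlety is the sign convention for the graded Jacobi identity with the degree $-2$ bracket in the iterated computation for $\lie_e$; one must verify that the shifted degrees of $\{e,\theta\}$ (namely $0$) and $e_i$ (namely $-1$) produce no Koszul corrections when commuting through the outer layers, which keeps the argument transparent.
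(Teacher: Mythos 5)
Your argument is correct and follows essentially the same route as the paper: both rest on the transferred contraction operators and the commutation relations of Proposition \ref{prop:cartan} (equivalently, the graded Jacobi identity together with the derived bracket formulas \eqref{eqn:derivedanchor} and \eqref{eqn:derivedbracket}). The only cosmetic difference is that you organize the $d_E$ computation as an induction on $k$ via $\iota_{e_0}d_E = \lie_{e_0} - d_E\iota_{e_0}$, whereas the paper commutes $d_E$ past all the contractions in a single expansion; the sign and reindexing bookkeeping you flag does close up exactly as you describe.
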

\begin{proof}
For $f \in C^\infty(M)$, the formula \eqref{eqn:derivedanchor} gives us $\lie_e f = \rho(e)(f)$. Then, by repeated use of the relation $[\lie_{e}, \iota_{e'}] = \iota_{\cbrack{e}{e'}}$, we have
\begin{align*}
(\lie_e \omega) (e_1, \dots, e_k) &= \iota_{e_k} \cdots \iota_{e_1} \lie_e \omega \\
&= \lie_e \iota_{e_k} \cdots \iota_{e_1} \omega - \sum_{i=1}^k \iota_{e_k} \cdots \iota_{e_{i-1}} \iota_{\cbrack{e}{e_i}} \iota_{e_{i+1}} \cdots \iota_{e_1} \omega \\
&= \rho(e)\left(\omega(e_1, \dots, e_k)\right) - \sum_{i=1}^k \omega(e_1, \dots, e_{i-1},\cbrack{e}{e_i}, e_{i+1}, \dots, e_k).
\end{align*}
Similarly,
\begin{align*}
&(d_E \omega)(e_0, \dots, e_k) = \iota_{e_k} \dots \iota_{e_0} d_E \omega \\
&= \sum_{i=0}^k (-1)^i \lie_{e_i} \iota_{e_k} \cdots \widehat{\iota}_{e_i} \cdots \iota_{e_0} \omega - \sum_{i < j} (-1)^i \iota_{e_k} \cdots \iota_{e_{j-1}}\iota_{\cbrack{e_i}{e_j}} \iota_{e_{j+1}} \cdots \widehat{\iota}_{e_i} \cdots \iota_{e_0} \omega \\
&= \sum_{i=0}^k (-1)^i \rho(e_i)\omega(e_0, \dots, \widehat{e}_i, \dots, e_k)  - \sum_{i < j} (-1)^i \omega(e_0, \dots, \widehat{e}_i, \dots, \cbrack{e_i}{e_j}, \dots, e_k).
\end{align*}
\end{proof}

\begin{remark}
In $\cO$, it is clear that the degree $3$ function $\theta$ is a cocycle, since $d_E \theta = \{\theta, \theta\} = 0$. It follows that $T = \Upsilon(\theta)$ is a cocycle in $C^3(E)$. An alternative approach is to directly define $T$ by the formula in Corollary \ref{cor:T} and then use the Cartan formula to verify that $d_E T = 0$.
\end{remark}

\begin{example}
In the case where $M$ is a point, a Courant algebroid is a Lie algebra $\mathfrak{g}$ equipped with an invariant nondegenerate symmetric bilinear form. Recall (see Example \ref{ex-VE}) that, in this case, $C(\mathfrak{g}) = \bigwedge \mathfrak{g}^*$. The associated $3$-cocycle (see Corollary \ref{cor:T}) is the Cartan $3$-cocycle 
\[T(v_1,v_2,v_3) = \pair{[v_1,v_2]}{v_3},\]
and from Theorem \ref{thm:cartan} we see that the differential $d_\mathfrak{g}$ is the Chevalley-Eilenberg differential.
\end{example}

\begin{example}\label{standard-CA}
The standard Courant algebroid on a manifold $M$ is defined as $E = TM \oplus T^*M$, with the bilinear form
\begin{equation*}
    \pair{X + \alpha}{Y + \beta} = \alpha(Y) + \beta(X)
\end{equation*}
and bracket 
 \begin{equation*}
     \cbrack{X+\alpha}{Y+\beta}=[X,Y]+\lie_{X}\beta -\iota_{Y} d\alpha.
 \end{equation*}
 In Example \ref{ex-ST}, it was observed that, in this case, $C(E) \cong \sym(\Der(\Omega(M))[-2])$. Under this identification, the $3$-cocycle $T \in C^3(E)$ corresponds to the de Rham operator $d$, which is a degree $1$ derivation of $\Omega(M)$.
  
If $H \in \Omega^3(M)$ is a closed $3$-form, then we can use the above identification to view $d+H$ as an element of $C^3(E)$, thus inducing the $H$-twisted version of the above bracket \cite{severa:letters, se-we:twist}, given by
 \begin{equation*}
      \cbrack{X+\alpha}{Y+\beta}_H=[X,Y]+\lie_{X}\beta -\iota_{Y} d\alpha + \iota_{X} \iota_{Y} H.
 \end{equation*}

We note that this perspective provides a good framework for the interpretation \cite{alekseev-xu, ks:derived} of the standard Courant bracket as a Vinogradov bracket.
\end{example}

\begin{example}
In contrast with many cohomology theories, Courant algebroid cohomology is not necessarily bounded in degree. For example, consider a Courant algebroid $E\to M$ with $\dim(M) > 0$ and $\rank(E) > 0$, where the anchor and bracket are zero. Since the differential is zero, we have
\[H^k(E)=C^k(E).\]
Let $\omega\in C^2(E)$ be such that $\sigma_\omega \neq 0$, and let $e\in\Gamma E$ be such that $\sigma_\omega(\langle e,e\rangle) \neq 0$. Such $\omega$ and $e$ always exist in this situation.

From Definition \ref{cochain}, we have 
\[\omega(e,e)=\frac{1}{2}\sigma_\omega(\langle e,e\rangle) \neq 0,\] 
and thus
\[ (\omega\cupprod\omega)(e,e,e,e)=\omega(e,e)^2\sum_{\pi\in Sh(2,2)}\sgn(\pi) = 2\omega(e,e)^2 \neq 0.\]
Similarly, one can show that $\omega^k$ and $e \cupprod \omega^k$ are nonzero, so in this case $C^k(E)$ is nontrivial for all $k \geq 0$. 
\end{example}

\section{Courant algebroid connections}
\label{sec:connections}

Let $E\to M$ be a Courant algebroid. As mentioned in the introduction, the notion of an $E$-connection arose in unpublished work of Alekseev and Xu \cite{alekseev-xu}. In this section, we develop the theory of $E$-connections in light of Theorems \ref{thm:isomorphism} and \ref{thm:cartan}. The key point is that, because of the existence of Cartan formulas, many results can be proved in a way that is virtually identical to the classical theory. We will sometimes omit such proofs, instead focusing on the aspects that are specific to Courant algebroids.

\subsection{Connections, covariant derivatives, and curvature}

Let $B\to M$ be a vector bundle.

\begin{definition}
An \emph{$E$-connection} on $B$ is a map $\nabla:\Gamma (E)\times \Gamma (B)\to \Gamma (B)$ such that
	\begin{enumerate}
		\item $\nabla_e(fb)=f\nabla_e(b)+\rho(e)(f)b$,
		\item $\nabla_{fe}b=f\nabla_e b$,
	\end{enumerate}
	for all $f \in C^\infty(M)$, $e \in \Gamma(E)$, and $b \in \Gamma(B)$.
\end{definition} 

Let $C^\bullet(E;B)=C^\bullet(E)\otimes \Gamma (B)$ denote the space of $B$-valued cochains. If $\nabla$ is an $E$-connection on $B$, then we can define a covariant derivative operator $D_\nabla$ on $C^\bullet(E;B)$ by the formula
\begin{equation} \label{eqn:dcartan}
\begin{split}
    (D_\nabla \tau)(e_0, \dots, e_k) =& \sum_{i=0}^k (-1)^i \nabla_{e_i}\tau(e_0, \dots, \widehat{e}_i, \dots, e_k)  \\
&- \sum_{i < j} (-1)^i \tau(e_0, \dots, \widehat{e}_i, \dots, \cbrack{e_i}{e_j}, \dots, e_k)
\end{split}
\end{equation}
for $\tau \in C^k(E;B)$. The covariant derivative operator satisfies the property
\begin{equation}\label{eqn:dderiv}
 D_\nabla(\omega \cupprod \tau) = (d_E \omega) \cupprod \tau + (-1)^k \omega \cupprod (D_\nabla \tau)
 \end{equation}
for $\omega \in C^k(E)$ and $\tau \in C^\bullet(E;B)$.

Conversely, given a degree $1$ operator $D$ on $C^\bullet(E;B)$ satisfying \eqref{eqn:dderiv}, there exists a unique $E$-connection $\nabla$ such that $D = D_\nabla$.

\begin{definition}
The \emph{curvature} of an $E$-connection $\nabla$  on $B$ is defined as the map $F_\nabla: \Gamma(E) \times \Gamma(E) \times \Gamma(B) \to \Gamma(B)$ given by
\begin{equation}\label{eqn:curv}
 F_\nabla(e_1, e_2)(b) = \nabla_{e_1}\nabla_{e_2}b - \nabla_{e_2} \nabla_{e_1}b - \nabla_{\cbrack{e_1}{e_2}}b.
 \end{equation}
If $\nabla$ is flat, i.e.\ $F_\nabla = 0$, then we say that $\nabla$ is a \emph{representation} of $E$ on $B$.
\end{definition}

\begin{prop}\label{prop:curv}
	The curvature $F_\nabla$ is an element of $C^2(E; \End(B))$. 
\end{prop}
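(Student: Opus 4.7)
The plan is to verify, directly from the defining formula \eqref{eqn:curv}, that $F_\nabla$ determines an element of $C^2(E; \End(B)) = C^2(E)\otimes_{C^\infty(M)}\Gamma(\End(B))$. Under the natural identification of this tensor product with maps $\Gamma(E)\times\Gamma(E)\to\Gamma(\End(B))$, three properties must be checked: (i) $F_\nabla(e_1,e_2)$ is $C^\infty(M)$-linear in $b$, so that it really lands in $\Gamma(\End(B))$; (ii) $F_\nabla$ is $C^\infty(M)$-linear in the last $E$-entry; and (iii) the sum $F_\nabla(e_1,e_2) + F_\nabla(e_2,e_1)$ has the form $\sigma_{F_\nabla}(\pair{e_1}{e_2})$ for an appropriate symbol.

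For (i), I would expand $F_\nabla(e_1,e_2)(fb)$ using the Leibniz rule $\nabla_e(fb) = f\nabla_e b + \rho(e)(f)\, b$ twice. All ``mixed'' terms cancel in pairs, and the residue reduces to $\bigl(\rho(e_1)\rho(e_2) - \rho(e_2)\rho(e_1) - \rho(\cbrack{e_1}{e_2})\bigr)(f)\cdot b$, which vanishes by axiom (A1) of Remark~\ref{rmk:anchor}. For (ii), I would expand $F_\nabla(e_1, fe_2)$, using axiom (C1) to rewrite $\cbrack{e_1}{fe_2} = f\cbrack{e_1}{e_2} + \rho(e_1)(f)\, e_2$ together with the Leibniz rule for $\nabla$; the two extra terms involving $\rho(e_1)(f)\nabla_{e_2}b$ cancel, leaving $fF_\nabla(e_1,e_2)(b)$.

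The step that really uses the Courant structure is (iii). Applying axiom (C4), the ``$\nabla_{e}\nabla_{e'}$'' contributions in $F_\nabla(e_1,e_2) + F_\nabla(e_2,e_1)$ cancel directly, and what remains is
\[
F_\nabla(e_1,e_2) + F_\nabla(e_2,e_1) = -\nabla_{\cD\pair{e_1}{e_2}}.
\]
I would then propose the symbol $\sigma_{F_\nabla}(f) := -\nabla_{\cD f}$, which should live in the $\End(B)$-valued analogue of $\vect(M)$, namely $\vect(M)\otimes_{C^\infty(M)}\Gamma(\End(B))\cong \Gamma(TM\otimes\End(B))$. Two things need checking: first, $\nabla_{\cD f}$ is $C^\infty(M)$-linear in $b$ because its symbol as a CDO on $B$ is $\rho(\cD f)=0$ by axiom (A2); second, $f\mapsto\nabla_{\cD f}$ is a derivation on $C^\infty(M)$, since $\cD$ itself is a derivation (a consequence of the Leibniz rule for $\rho$ together with nondegeneracy of $\pair{\cdot}{\cdot}$), so that $\nabla_{\cD(fg)} = \nabla_{f\cD g + g\cD f} = f\nabla_{\cD g} + g\nabla_{\cD f}$.

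The main conceptual hurdle is really just bookkeeping: one must identify the correct enlarged notion of ``symbol'' for $\End(B)$-valued $2$-cochains (namely, $\End(B)$-valued vector fields, or equivalently derivations $C^\infty(M)\to\Gamma(\End(B))$) and recognize that $-\nabla_{\cD\,\cdot\,}$ sits there. Once that dictionary is fixed, the three verifications are elementary calculations in which axioms (C1), (C4), (A1), and (A2) each intervene exactly once; in particular the proof follows the classical template of showing that curvature is a tensor, with the Courant-specific input concentrated entirely in the symbol identity.
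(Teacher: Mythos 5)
Your proof is correct and coincides with the second of the two approaches sketched in the paper: a direct verification from \eqref{eqn:curv} of $C^\infty(M)$-linearity in $b$ and in the last $E$-entry, together with the symbol identity $\sigma_{F_\nabla}(f) = -\nabla_{\cD f}$, using (C1), (C4), (A1), and (A2) exactly as you describe. The paper additionally sketches a slicker alternative via $F_\nabla(e_1,e_2)(b) = (D_\nabla^2 b)(e_1,e_2)$ and \eqref{eqn:dderiv}, but your route is the same as its direct argument.
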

\begin{proof}
We will sketch two different approaches to the proof. One is to use \eqref{eqn:dcartan} to show that $F_\nabla(e_1,e_2)(b) = (D_\nabla^2 b)(e_1,e_2)$. This proves that $F_\nabla(\cdot, \cdot)(b)$ is a $2$-cochain. Using \eqref{eqn:dderiv}, we can also see that $F_\nabla$ is $C^\infty(M)$-linear in $b$, which proves that it takes values in $\End(B)$.

The other approach is to directly check, using \eqref{eqn:curv}, that $F_\nabla(e_1,e_2)(b)$ is $C^\infty(M)$-linear in $e_2$ and $b$, and to observe that 
\begin{equation*}
	 F_\nabla(e,e')(b)+F_\nabla(e',e)(b)=-\nabla_{\cbrack{e}{e'}+\cbrack{e'}{e}}b=-\nabla_{\cD\pair{e}{e'}}b.
\end{equation*}	
Therefore, $F_\nabla$ satisfies the conditions of Definition \ref{cochain} with $\sigma_{F_\nabla}(f)(b) = -\nabla_{\cD f} b$. Here, it is important that $\sigma_{F_\nabla}$ is $C^\infty(M)$-linear in $b$ (because $\rho \circ \cD = 0$) and a derivation in $f$.
\end{proof}

The $E$-connection $\nabla$ naturally induces an $E$-connection $\widetilde{\nabla}$ on $\End(B)$, given by
\begin{equation}\label{eqn:connend}
\widetilde{\nabla}_e \tau = [\nabla_e, \tau]
\end{equation}
for $\tau \in \Gamma(\End(B))$ and $e \in \Gamma(E)$, where the right side is a commutator bracket of operators on $\Gamma(B)$. The following is a straightforward calculation using \eqref{eqn:dcartan} and \eqref{eqn:curv}.
\begin{prop}\label{prop:bianchi}
The Bianchi identity $D_{\widetilde{\nabla}}F_\nabla = 0$ holds.
\end{prop}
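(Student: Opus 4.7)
My plan is to mirror the classical de Rham proof of the Bianchi identity, exploiting the fact that the Cartan formula \eqref{eqn:dcartan} behaves exactly as it does in ordinary connection theory. The strategy has three parts: first extend the identity $D_\nabla^2 = F_\nabla \cupprod (\cdot)$ from $\Gamma(B)$ to all of $C^\bullet(E;B)$; second, establish an $\End(B)$-valued Leibniz rule for $D_\nabla$; and third, combine these with $D_\nabla \circ D_\nabla^2 = D_\nabla^2 \circ D_\nabla$ to conclude.

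The first sketch in the proof of Proposition \ref{prop:curv} already gives $F_\nabla(e_1, e_2)(b) = (D_\nabla^2 b)(e_1, e_2)$ for $b \in \Gamma(B)$, i.e.\ $D_\nabla^2 b = F_\nabla \cupprod b$. To extend this to arbitrary $\tau \in C^\bullet(E; B)$, I would write $\tau$ locally as a sum of products $\omega \cupprod b$ with $\omega \in C^\bullet(E)$. Applying \eqref{eqn:dderiv} twice, the two cross terms cancel for sign reasons and $d_E^2 \omega = 0$ by Proposition \ref{prop:cartan}, so $D_\nabla^2(\omega \cupprod b) = \omega \cupprod (F_\nabla \cupprod b)$. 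Since $|F_\nabla| = 2$ is even, graded commutativity of the Keller-Waldmann product gives $\omega \cupprod F_\nabla = F_\nabla \cupprod \omega$, and hence $D_\nabla^2 \tau = F_\nabla \cupprod \tau$ for all $\tau$.

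Next, I would establish the $\End(B)$-valued Leibniz rule
\begin{equation*}
D_\nabla(\phi \cupprod \tau) = (D_{\widetilde{\nabla}} \phi) \cupprod \tau + (-1)^{|\phi|} \phi \cupprod (D_\nabla \tau)
\end{equation*}
for $\phi \in C^\bullet(E; \End(B))$ and $\tau \in C^\bullet(E;B)$, where $\cupprod$ uses the evaluation $\End(B) \otimes B \to B$. For degree-zero $\phi = \sigma \in \Gamma(\End(B))$, applying \eqref{eqn:dcartan} to both sides and using $\widetilde{\nabla}_e \sigma = [\nabla_e, \sigma]$ reduces the check to the identity $\nabla_e(\sigma b) - \sigma(\nabla_e b) = (\widetilde{\nabla}_e \sigma)(b)$, which is precisely the definition \eqref{eqn:connend}. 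The general case follows by writing $\phi$ locally as $\omega \cupprod \sigma$ and combining this with \eqref{eqn:dderiv}.

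The Bianchi identity is then formal: applying the Leibniz rule to $\phi = F_\nabla$ and using $F_\nabla \cupprod \tau = D_\nabla^2 \tau$,
\begin{equation*}
D_\nabla^3 \tau = D_\nabla(F_\nabla \cupprod \tau) = (D_{\widetilde{\nabla}} F_\nabla) \cupprod \tau + F_\nabla \cupprod (D_\nabla \tau) = (D_{\widetilde{\nabla}} F_\nabla) \cupprod \tau + D_\nabla^3 \tau,
\end{equation*}
so $(D_{\widetilde{\nabla}} F_\nabla) \cupprod \tau = 0$ for every $\tau \in C^\bullet(E; B)$, and evaluating on $\tau = b \in \Gamma(B)$ gives $D_{\widetilde{\nabla}} F_\nabla = 0$. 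The main obstacle is the $\End(B)$-valued Leibniz rule of the second step: one has to be careful that the Courant-bracket terms in \eqref{eqn:dcartan} interact correctly with the composition in $\End(B)$ and with the shuffle signs in the cup product. A more direct (and arguably less enlightening) alternative is to expand $(D_{\widetilde{\nabla}} F_\nabla)(e_0, e_1, e_2)(b)$ via \eqref{eqn:dcartan} and \eqref{eqn:curv}, after which axiom (C3) produces the cancellations; the operator-theoretic route essentially packages that Jacobi computation into $d_E^2 = 0$.
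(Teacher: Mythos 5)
Your argument is correct, but it takes a genuinely different route from the paper's. The paper simply asserts that the Bianchi identity follows by a ``straightforward calculation using \eqref{eqn:dcartan} and \eqref{eqn:curv}'', i.e.\ by expanding $(D_{\widetilde{\nabla}}F_\nabla)(e_0,e_1,e_2)(b)$ directly and letting axiom (C3) produce the cancellations --- exactly the alternative you mention in your last sentence. Your main argument instead packages that computation into two structural lemmas: the identity $D_\nabla^2 = F_\nabla \cupprod (\cdot)$ on all of $C^\bullet(E;B)$ (which follows cleanly from \eqref{eqn:dderiv}, $d_E^2=0$, and the fact that $F_\nabla$ has even degree and acts $C^\bullet(E)$-linearly), and the $\End(B)$-valued Leibniz rule, after which $D_\nabla^3 = D_\nabla^3 + (D_{\widetilde{\nabla}}F_\nabla)\cupprod(\cdot)$ forces the conclusion. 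Both steps check out: every element of $C^\bullet(E;B) = C^\bullet(E)\otimes\Gamma(B)$ is a sum of decomposables $\omega\cupprod b$, so your reductions are legitimate, and the degree-zero case of the Leibniz rule is precisely the definition \eqref{eqn:connend} once one notes that the Courant-bracket terms in \eqref{eqn:dcartan} pass through a $C^\infty(M)$-linear $\sigma\in\Gamma(\End(B))$ untouched. What your route buys is that the Jacobi identity (C3) is invoked only once, hidden inside $d_E^2=0$ from Proposition \ref{prop:cartan}, and the Bianchi identity becomes the formal tautology $[D_\nabla, D_\nabla^2]=0$; the cost is the bookkeeping needed to verify the Leibniz rule and the compatibility of the cup product with the evaluation $\End(B)\otimes B\to B$, which the paper's direct expansion avoids. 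This is very much in the spirit of the paper's stated philosophy that the Cartan formula lets classical proofs carry over verbatim.
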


Suppose that $B$ is equipped with a nondegenerate bilinear form $\pair{\cdot}{\cdot}_B$. Then we can define the adjoint $E$-connection $\nabla^\dagger$ on $B$, given by
\begin{equation}\label{eqn:adjoint}
\pair{\nabla^\dagger_e b_1}{b_2}_B = \rho(e)\pair{b_1}{b_2}_B - \pair{b_1}{\nabla_e b}_B
\end{equation}
for all $e \in \Gamma(E)$, $b_1,b_2 \in \Gamma(B)$. The following is a straightforward calculation using \eqref{eqn:curv} and (A1) from Remark \ref{rmk:anchor}.
\begin{prop}\label{prop:adjointcurve}
The curvatures of $\nabla$ and $\nabla^\dagger$ are related by the identity $F_{\nabla^\dagger}=-(F_\nabla)^\dagger$.
\end{prop}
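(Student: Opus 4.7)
The plan is to verify the identity pointwise by showing that
\[ \pair{F_{\nabla^\dagger}(e_1,e_2)(b_1)}{b_2}_B = -\pair{b_1}{F_\nabla(e_1,e_2)(b_2)}_B \]
for all $e_1, e_2 \in \Gamma(E)$ and $b_1, b_2 \in \Gamma(B)$. Since $\pair{\cdot}{\cdot}_B$ is nondegenerate, this is equivalent to the stated identity $F_{\nabla^\dagger} = -(F_\nabla)^\dagger$. The whole proof is then a direct unpacking of the three summands in \eqref{eqn:curv}, applied to $\nabla^\dagger$, using \eqref{eqn:adjoint}.

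First I would expand $\pair{\nabla^\dagger_{e_1}\nabla^\dagger_{e_2} b_1}{b_2}_B$ by applying \eqref{eqn:adjoint} twice: once to move $\nabla^\dagger_{e_1}$ past the pairing, and once more to move $\nabla^\dagger_{e_2}$. This produces four terms: a pure anchor term $\rho(e_1)\rho(e_2)\pair{b_1}{b_2}_B$, two mixed terms of the form $\rho(e_i)\pair{b_1}{\nabla_{e_j} b_2}_B$, and the fully transposed term $\pair{b_1}{\nabla_{e_2}\nabla_{e_1} b_2}_B$. Repeating with $e_1,e_2$ swapped and subtracting, the two mixed terms cancel and the anchor terms collapse to $[\rho(e_1),\rho(e_2)]\pair{b_1}{b_2}_B$. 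For the Courant bracket summand, one application of \eqref{eqn:adjoint} gives $\rho(\cbrack{e_1}{e_2})\pair{b_1}{b_2}_B - \pair{b_1}{\nabla_{\cbrack{e_1}{e_2}} b_2}_B$.

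The key step is the cancellation of the anchor contributions. This is exactly where identity (A1) from Remark \ref{rmk:anchor}, namely $\rho(\cbrack{e_1}{e_2}) = [\rho(e_1),\rho(e_2)]$, gets used: subtracting the bracket term from the iterated terms kills the anchor pieces entirely, leaving
\[ -\pair{b_1}{\nabla_{e_1}\nabla_{e_2} b_2 - \nabla_{e_2}\nabla_{e_1} b_2 - \nabla_{\cbrack{e_1}{e_2}} b_2}_B = -\pair{b_1}{F_\nabla(e_1,e_2)(b_2)}_B, \]
which is the desired equality.

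The computation is routine bookkeeping once the pieces are written out, and no obstacle of substance is expected; the only thing to be careful about is that the anchor derivatives produced by iterating \eqref{eqn:adjoint} really do line up as a commutator. That is the precise content of (A1), so the authors' mention of it in the statement is effectively a hint for the single nontrivial moment in the calculation.
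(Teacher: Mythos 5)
Your proposal is correct and is exactly the ``straightforward calculation using \eqref{eqn:curv} and (A1)'' that the paper invokes without writing out: apply \eqref{eqn:adjoint} iteratively, antisymmetrize so the mixed terms cancel, and use $\rho(\cbrack{e_1}{e_2}) = [\rho(e_1),\rho(e_2)]$ to kill the anchor contributions. Nothing further is needed.
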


\subsection{Examples}
The following are some important examples of $E$-connections.

\begin{example}\label{ex:wedgetop}
There is a canonical representation $\nabla^{\topp}$ of $E$ on $\bigwedge^{\topp}E$, given by 
\[\nabla^{\topp}_e(e_1 \wedge \cdots \wedge e_m) = \sum_{i=1}^m e_1 \wedge \cdots \wedge \cbrack{e}{e_i} \wedge \cdots \wedge e_m.\]
This is the representation that was used by Sti\'enon and Xu \cite{stienon-xu:modular} in their construction of the modular class. We will discuss this further in Section \ref{sec:modular} (also see Remark \ref{rmk:top}).
\end{example}

\begin{example}\label{ex:regular}
If $E$ is regular, i.e.\ if the anchor map $\rho$ has constant rank, then there are canonical representations on the following bundles:
\begin{itemize}
    \item $\ker\rho/\im\rho^*$. This representation is given by $\nabla_e \bar{e}'=\overline{\lie_e e'}=\overline{\cbrack{e}{e'}}$, where the overline indicates the image under the projection $\ker\rho\to \ker\rho/\im\rho^*$.
    \item $\ker \rho^*$. This representation is given by $\nabla_e \alpha = \lie_{\rho(e)}\alpha$ for $\alpha \in \Gamma(\ker \rho^*) \subseteq \Omega^1(M)$.
    \item $\coker \rho$. This representation is given by $\nabla_e \bar{X}=\overline{\lie_{\rho(e)}X}=\overline{[\rho(e),X]}$, where the overline indicates the image under the projection $TM\to \coker \rho$.
\end{itemize}
\end{example}

\begin{example}\label{repCou}
Given a choice of linear connection $\nabla: \vect(M) \times \Gamma(E) \to \Gamma(E)$, we can define the following $E$-connections. These $E$-connections extend the representations in Example \ref{ex:regular} but are not flat in general.
\begin{itemize}
    \item $\nabla^E$ is an $E$-connection on $E$, given by
    \[ \nabla_e^E e'=\cbrack{e}{e'}+\nabla_{\rho(e')}e-\rho^*\pair{D_\nabla e}{e'}.\]
    \item $\nabla^{TM}$ is an $E$-connection on $TM$, given by
    \[ \nabla_e^{TM} X= [\rho(e),X]+\rho(\nabla_X e). \]
    \item $\nabla^{T^*M}$ is an $E$-connection on $T^*M$, given by
    \[\nabla_e^{T^*M} \alpha=\lie_{\rho(e)}\alpha-\pair{D_\nabla e}{\rho^*\alpha}.\]
\end{itemize}

To verify that $\nabla^E$ is an $E$-connection, we check that
\begin{align*}
     \nabla_e^E fe'= & \cbrack{e}{fe'}+\nabla_{\rho(fe')}e-\rho^*\pair{D_\nabla e}{fe'} \\
    = & f\cbrack{e}{e'}+\rho(e)(f)e'+f\nabla_{\rho(e')}e-f\rho^*\pair{D_\nabla e}{e'} \\
    =& f\nabla^E_ee'+\rho(e)(f)e'
\end{align*}
and
\begin{align*}
     \nabla_{fe}^E e'= & \cbrack{fe}{e'}+\nabla_{\rho(e')}fe-\rho^*\pair{D_\nabla fe}{e'} \\
    = & -\cbrack{e'}{fe}+\cD\pair{fe}{e'}+f\nabla_{\rho(e')}e+\rho(e')(f)e-\rho^*\pair{f D_\nabla e}{e'}-\rho^*\pair{df\otimes e}{e'}\\
    =&-f\cbrack{e'}{e}+f\cD\pair{e}{e'}+f\nabla_{\rho(e')}e-f\rho^*\pair{D_\nabla e}{e'}=f\nabla_e^E e'.
\end{align*}
For $\nabla^{TM}$ and $\nabla^{T^*M}$ the computations are similar.
\end{example}

\begin{example}\label{H-Econ}
Consider the case $E = TM \oplus T^*M$, with the $H$-twisted Courant bracket; see Example \ref{standard-CA}. In this case, it is possible to make a special choice of linear connection so that the $E$-connection $\nabla^E$ takes a simple form.  

First, choose a torsion-free linear connection $\nabla$ on $TM$, and let $\nabla^\dagger$ be the dual connection on $T^*M$.
Define a linear connection $\widehat{\nabla}$ on $TM\oplus T^*M$ by the formula 
 \[\widehat{\nabla}_X (Y+\beta)=\nabla_X Y+\nabla^\dagger_X \beta +\frac{1}{2}i_Xi_YH\]
 for $X,Y\in\mathfrak{X}(M),\ \beta\in\Omega^1(M)$. The associated $E$-connection $\nabla^E$ is then given by
 \begin{align*}
        \nabla^E_{X+\alpha}(Y+\beta)=&\cbrack{X+\alpha}{Y+\beta}_H+\widehat{\nabla}_Y (X+\alpha)-\langle D_{\widehat{\nabla}}(X+\alpha), Y+\beta\rangle\\
        =& [X,Y]+\lie_X\beta-i_Yd\alpha+i_Xi_YH+\nabla_Y X+\nabla_Y^\dagger \alpha+\frac{1}{2}i_Yi_XH\\
        &-\langle D_\nabla X, \beta\rangle-\langle D_{\nabla^\dagger}\alpha-\frac{1}{2}i_XH, Y\rangle \\
         =&[X,Y]+\nabla_YX+\lie_X\beta-\langle D_\nabla X,\beta\rangle\\
         &-i_Yd\alpha+\nabla_Y^\dagger \alpha -\langle D_{\nabla^\dagger}\alpha, Y\rangle\\
         =&\nabla_X Y+\nabla^\dagger_X \beta.
\end{align*}
This calculation will be important in Section \ref{sec:characteristic}.
\end{example}

\begin{remark}\label{rmk:top}
$E$-connections can be extended to tensor powers using a derivation rule. In particular, the connection $\nabla^E$ extends to $\bigwedge^{\topp} E$ as follows:
\[ \nabla^E (e_1 \wedge \cdots \wedge e_m) = \sum_{i=1}^m e_1 \wedge \cdots \wedge \nabla^E_e e_i \wedge \cdots e_m.\]
It turns out that this extension of $\nabla^E$ coincides with the representation $\nabla^{\topp}$ in Example \ref{ex:wedgetop}. The reason is that, for $e, e_1, e_2 \in \Gamma(E)$,
\begin{align*}
\pair{\nabla^E_e e_1 - \cbrack{e}{e_1}}{e_2}  &= \pair{\nabla_{\rho(e_1)} e}{e_2} - \pair{\rho^*\pair{D_\nabla e}{e_1}}{e_2} \\
&= \pair{\nabla_{\rho(e_1)} e}{e_2} - \pair{\nabla_{\rho(e_2)} e}{e_1},
\end{align*}
which is $C^\infty(M)$-linear and skew-symmetric in $e_1, e_2$.  Therefore $\nabla_{\rho(e_1)}e - \rho^* \pair{D_\nabla e}{e_1}$ is a traceless endomorphism of $E$ which vanishes in the extension to $\bigwedge^{\topp}E$.
\end{remark}

\begin{remark}
The connections in Example \ref{repCou} are analogues of the \emph{Bott connections} for Lie algebroids \cite{fernandes:connections}. In further analogy with the case of Lie algebroids, one should expect the connections in Example \ref{repCou} to be part of an adjoint representation up to homotopy of $E$ on the $3$-term complex $T^*M \to E \to TM$. A full development of this idea would be outside the scope of this paper, but we will make some comments that may be helpful to the reader who wishes to pursue further.

Recall \cite{abad-crainic:algbd, m-gs:vbalgbds} that a representation up to homotopy of a Lie algebroid $A \to M$ on a graded vector bundle $\mathcal{B} \to M$ is a differential on the complex $\wedge \Gamma(A^*) \otimes \Gamma(\mathcal{B})$. Because this complex is bigraded, the differential splits into different components, and the equation $D^2 = 0$ splits into a series of equations relating the different components. The components can be interpreted as connections and Lie algebroid forms with values in homomorphism bundles. 

One could similarly define a representation up to homotopy of a Courant algebroid $E \to M$ as a differential on the complex $C^k(E) \otimes \Gamma(\mathcal{B})$. The rest of the analysis would be very similar to the case of Lie algebroids. In particular, \eqref{eqn:dcartan} could be used to express some of the equations in terms of brackets.
\end{remark}

The following proposition gives some useful properties of the Bott connections.
\begin{prop}\label{prop:selfadjoint}
Let $E \to M$ be a Courant algebroid, and let $\nabla$ be a linear connection on $E$. Then the $E$-connections $\nabla^E$, $\nabla^{TM}$, and $\nabla^{T^*M}$ satisfy the identities
\begin{align*}
	&\pair{\nabla^E_{e_1}e_2}{e_3}+\pair{e_2}{\nabla^E_{e_1}e_3}=\rho(e_1)\pair{e_2}{e_3},\\
	&\pair{\nabla^{TM}_{e}X}{\alpha}+\pair{X}{\nabla^{T^*M}_e\alpha}=\rho(e)\pair{X}{\alpha}.
\end{align*}
for all $e, e_i \in \Gamma(E)$, $X \in \vect(M)$, and $\alpha \in \Omega^1(M)$.
\end{prop}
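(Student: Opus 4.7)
The plan is to prove both identities by direct expansion using the explicit formulas in Example \ref{repCou}, relying on the adjoint property $\pair{\rho^* \mu}{e} = \mu(\rho(e))$ to convert $\rho^*$-terms into ordinary pairings, and then exploiting the resulting cancellations.

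For the first identity, I would expand $\pair{\nabla^E_{e_1}e_2}{e_3}$ using the formula $\nabla^E_{e_1}e_2 = \cbrack{e_1}{e_2} + \nabla_{\rho(e_2)}e_1 - \rho^*\pair{D_\nabla e_1}{e_2}$. Since $D_\nabla e_1$ is an $E$-valued $1$-form, the key rewriting is
\[ \pair{\rho^*\pair{D_\nabla e_1}{e_2}}{e_3} = \pair{D_\nabla e_1}{e_2}(\rho(e_3)) = \pair{\nabla_{\rho(e_3)}e_1}{e_2}, \]
and similarly for $\pair{e_2}{\rho^*\pair{D_\nabla e_1}{e_3}} = \pair{\nabla_{\rho(e_2)}e_1}{e_3}$. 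When the two expansions are added, the four $\nabla_{\rho(\cdot)}e_1$-terms cancel pairwise, leaving only $\pair{\cbrack{e_1}{e_2}}{e_3} + \pair{e_2}{\cbrack{e_1}{e_3}}$, which equals $\rho(e_1)\pair{e_2}{e_3}$ by axiom (C2).

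For the second identity, I would again expand each term. From $\nabla^{TM}_e X = [\rho(e),X] + \rho(\nabla_X e)$ one gets $\alpha([\rho(e),X]) + \alpha(\rho(\nabla_X e))$. From $\nabla^{T^*M}_e \alpha = \lie_{\rho(e)}\alpha - \pair{D_\nabla e}{\rho^* \alpha}$ and the standard identity $(\lie_{\rho(e)}\alpha)(X) = \rho(e)(\alpha(X)) - \alpha([\rho(e),X])$, together with the adjoint rewriting $\pair{D_\nabla e}{\rho^* \alpha}(X) = \pair{\nabla_X e}{\rho^*\alpha} = \alpha(\rho(\nabla_X e))$, one obtains $\rho(e)(\alpha(X)) - \alpha([\rho(e),X]) - \alpha(\rho(\nabla_X e))$. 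Adding the two expressions produces clean cancellation of the bracket terms and the $\nabla_X e$-terms, leaving $\rho(e)\pair{X}{\alpha}$.

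There is no real obstacle here; the proof is a bookkeeping exercise. The one subtlety worth flagging is keeping straight the two pairings in play (the bilinear form on $E$ versus the duality pairing between $TM$ and $T^*M$) and invoking the adjoint relation $\pair{\rho^*(-)}{-} = \langle -, \rho(-)\rangle$ at the right moment in each calculation; once that is set up, both identities fall out, with axiom (C2) supplying the final step of the first identity and the usual Cartan-style formula for the Lie derivative of a $1$-form supplying the final step of the second.
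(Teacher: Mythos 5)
Your proof is correct and follows essentially the same route as the paper's: expand the explicit formulas for $\nabla^E$, $\nabla^{TM}$, $\nabla^{T^*M}$, rewrite the $\rho^*$-terms via $\pair{\rho^*\mu}{e}=\mu(\rho(e))$ so the connection terms cancel in pairs, and finish with Axiom (C2) for the first identity and the Cartan formula for $\lie_{\rho(e)}\alpha$ for the second. No gaps.
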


\begin{proof}
We first note that $\pair{\rho^*\pair{D_\nabla e_1}{e_2}}{e_3} = \pair{\pair{D_\nabla e_1}{e_2}}{\rho(e_3)} = \pair{\nabla_{\rho(e_3)} e_1}{e_2}$. Using this and Axiom (C2) in Definition \ref{courant}, we have
\begin{align*}
		\pair{\nabla^E_{e_1}e_2}{e_3}+\pair{e_2}{\nabla^E_{e_1}e_3}=& \pair{\cbrack{e_1}{e_2}+\nabla_{\rho(e_2)}e_1}{e_3} -\pair{\nabla_{\rho(e_3)} e_1}{e_2}\\
		&+\pair{e_2}{\cbrack{e_1}{e_3}+\nabla_{\rho(e_3)}e_1}-\pair{\nabla_{\rho(e_2)} e_1}{e_3}\\
		=& \pair{\cbrack{e_1}{e_2}}{e_3} + \pair{e_2}{\cbrack{e_1}{e_3}}\\
		=& \rho(e_1)\pair{e_2}{e_3}.
\end{align*}
Additionally, since $\pair{X}{\lie_{\rho(e)}\alpha} = \rho(e)\pair{X}{\alpha} - \pair{[\rho(e),X]}{\alpha}$ and $\pair{X}{\pair{D_\nabla e}{\rho^*\alpha}} = \pair{\nabla_X \alpha}{\rho^*\alpha} = \pair{\rho(\nabla_X e)}{\alpha}$, we have
\begin{align*}
		\pair{\nabla^{TM}_e X}{\alpha}+\pair{X}{\nabla^{T^*M}_e\alpha} &= \pair{[\rho(e),X]+\rho(\nabla_X e)}{\alpha}+\pair{X}{\lie_{\rho(e)}\alpha-\pair{D_\nabla e}{\rho^*\alpha}}\\
		&=\rho(e)\pair{X}{\alpha}.\qedhere
\end{align*}
\end{proof}

\begin{remark}
The first identity in Proposition \ref{prop:selfadjoint} says that $\nabla^E$ is self-adjoint with respect to the bilinear form on $E$. The second identity can also be seen as a self-adjointness property, in the sense that the $E$-connection $\nabla^{TM} \oplus \nabla^{T^*M}$ on $TM \oplus T^*M$ is self-adjoint with respect to the usual symmetric bilinear form $\pair{(X,\alpha)}{(Y,\beta)} = \alpha(Y) + \beta(X)$. 
\end{remark}

\section{The modular class}\label{sec:modular}

The modular class of a Courant algebroid was introduced by Sti\'enon and Xu \cite{stienon-xu:modular} in the more general setting of Loday algebroids. Their construction produces a degree $1$ class in the \emph{na\"ive cohomology} of $E$. By working in the na\"ive complex, they had access to a Cartan formula, which allowed them to construct the class following essentially the same procedure as in \cite{elw}. In degree $1$, the na\"ive cohomology is isomorphic to the standard cohomology, so they indirectly obtained a class in $H^1(E)$. 

Here we review the construction. But now, as a result of Theorem \ref{thm:cartan}, we are able to make the additional observation that the construction directly produces a class in $H^1(E)$.

Let $E \to M$ be a Courant algebroid, and let $\nabla$ be a representation of $E$ on a trivializable real line bundle $L \to M$. Choose a nonvanishing section $\lambda\in\Gamma(L)$. Then, for each $e\in \Gamma(E)$, there exists a unique function $g_e \in C^\infty(M)$ such that $\nabla_e \lambda= g_e \lambda$. The fact that $\nabla_{fe}\lambda= f\nabla_e\lambda$ implies that $g_e=\pair{\xi_{\lambda}}{e}$ for some $\xi_\lambda\in\Gamma (E)$. 

The flatness of $\nabla$ implies that, for all $e_1, e_2 \in \Gamma(E)$,
\begin{align*}
		\pair{\xi_\lambda}{\cbrack{e_1}{e_2}}\lambda&=\nabla_{\cbrack{e_1}{e_2}}\lambda \\
		&=\nabla_{e_1}\nabla_{e_2}\lambda-\nabla_{e_2}\nabla_{e_1}\lambda \\
		&=\left(\rho(e_1)(\pair{\xi_\lambda}{e_2}) - \rho(e_2)(\pair{\xi_\lambda}{e_1})\right)\lambda.
\end{align*}
Therefore, if we view $\xi_\lambda$ as an element of $C^1(E)$, we have $d_E \xi_\lambda = 0$.

If $\lambda'$ is another nonvanishing section, then we can write $\lambda'=f\lambda$ for some nonvanishing $f \in C^\infty(M)$. Then we have
\[\pair{\xi_{\lambda'}}{e}f \lambda = \pair{\xi_{\lambda'}}{e}\lambda'=\nabla_e\lambda'=\nabla_e f\lambda=f\nabla_e \lambda+\rho(e)(f)\lambda,\]
and therefore 
\[\xi_{\lambda'}=\xi_\lambda+d_E \log|f|.\]
Thus the class $[\xi_\lambda] \in H^1(E)$ is well-defined and independent of $\lambda$ (so the subscript $\lambda$ can be omitted). The class $[\xi]$ is called the \emph{modular class} of the representation $(L,\nabla)$. 

In the case where $L$ is not trivializable, the modular class is defined as $\frac{1}{2}[\xi^{L \otimes L}]$, where $[\xi^{L \otimes L}]$ is the modular class of the induced representation on $L \otimes L$. This definition is justified by the following fact.
\begin{lemma}
Let $(L,\nabla^L)$ and $(L',\nabla^{L'})$ be trivializable line bundles with representations of $E$. Let $[\xi^L]$ and $[\xi^{L'}]$ denote the respective modular classes. Then the modular class of the induced representation on $L\otimes L'$ is $[\xi^L] + [\xi^{L'}]$.
\end{lemma}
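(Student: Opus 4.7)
The plan is essentially to compute everything at the level of cocycle representatives and observe that the construction of $\xi$ commutes with tensor products in the obvious way.

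First, I would choose nonvanishing sections $\lambda \in \Gamma(L)$ and $\lambda' \in \Gamma(L')$, so that $\lambda \otimes \lambda'$ is a nonvanishing section of $L \otimes L'$, which is also trivializable. By definition of the induced representation $\nabla^{L \otimes L'}$, the Leibniz rule gives
\[
\nabla^{L \otimes L'}_e(\lambda \otimes \lambda') = (\nabla^L_e \lambda)\otimes \lambda' + \lambda \otimes (\nabla^{L'}_e \lambda') = \bigl(\pair{\xi^L_\lambda}{e} + \pair{\xi^{L'}_{\lambda'}}{e}\bigr)\lambda\otimes\lambda'
\]
for every $e \in \Gamma(E)$. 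Reading off the section of $E$ that appears on the right via the bilinear form, this immediately identifies
\[
\xi^{L\otimes L'}_{\lambda\otimes\lambda'} = \xi^L_\lambda + \xi^{L'}_{\lambda'}
\]
as elements of $C^1(E) \cong \Gamma(E)$.

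Passing to cohomology then yields $[\xi^{L\otimes L'}] = [\xi^L] + [\xi^{L'}]$, which is the claim. The only point that deserves a brief remark is independence of choices: if $\lambda, \lambda'$ are replaced by $f\lambda, f'\lambda'$ for nonvanishing $f, f' \in C^\infty(M)$, the change in $\xi^L_\lambda + \xi^{L'}_{\lambda'}$ is $d_E \log|f| + d_E \log|f'| = d_E \log|ff'|$, which is precisely the coboundary one gets from replacing $\lambda \otimes \lambda'$ by $ff' \lambda \otimes \lambda'$, so everything is consistent on the nose (not merely up to coboundary). There is no real obstacle here; the statement is essentially a reformulation of the Leibniz rule for the induced connection.
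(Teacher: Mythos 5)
Your proof is correct and follows exactly the paper's argument: pick nonvanishing sections, apply the Leibniz rule for the induced connection on $L\otimes L'$, and read off $\xi^{L\otimes L'}_{\lambda\otimes\lambda'} = \xi^L_\lambda + \xi^{L'}_{\lambda'}$ at the level of cocycles. The extra remark about independence of the choice of sections is harmless but not needed, since well-definedness of the modular class was already established.
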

\begin{proof}
Let $\lambda^L$ and $\lambda^{L'}$ be nonvanishing sections of $L$ and $L'$, respectively. Then
\begin{align*}
	\pair{\xi^{L\otimes L'}_{\lambda^L \otimes \lambda^{L'}}}{e}\lambda^L\otimes\lambda^{L'}&=\nabla^{L\otimes L'}_e(\lambda^L\otimes\lambda^{L'})\\
	&=\nabla^L_e\lambda^L\otimes\lambda^{L'}+\lambda^L\otimes\nabla_e^{L'}\lambda^{L'}\\
	&=\pair{\xi^L_{\lambda^L}+\xi^{L'}_{\lambda^{L'}}}{e}\lambda^L\otimes\lambda^{L'}. \qedhere
\end{align*}
\end{proof}

The (intrinsic) modular class of $E$ is defined in \cite{stienon-xu:modular} to be the modular class of the canonical representation $\nabla^{\topp}$ on $\bigwedge^{\topp} E$ (see Example \ref{ex:wedgetop}). They proved that, if $E = A \oplus A^*$, where $(A, A^*)$ is a Lie bialgebroid, then the modular class vanishes. It is also known (e.g.\ \cite{meinrenken:book}) that quadratic Lie algebras (i.e.\ Courant algebroids where $M$ is a point) are unimodular. Using a supergeometric argument, Grabowski \cite{grabowski:modular} observed that, in fact, all Courant algebroids have vanishing modular class. Here we give a new proof of this result.
\begin{prop}\label{prop:modtensor}
The modular class vanishes for every Courant algebroid.
\end{prop}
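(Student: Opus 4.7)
The plan is to exhibit an explicit closed representative of the modular cocycle and show that it is zero. Let $L = \bigwedge^{\topp} E$. Since the bilinear form on $E$ is nondegenerate and symmetric, it induces a nondegenerate symmetric bilinear form $\pair{\cdot}{\cdot}_L$ on $L$. Even when $L$ itself is not trivializable, this form gives a canonical nowhere-vanishing section of $L \otimes L$: for any local nonvanishing section $\lambda$ of $L$, set
\[ s_0 := \pair{\lambda}{\lambda}_L^{-1} \, \lambda \otimes \lambda .\]
Under a change $\lambda' = f\lambda$, both $\lambda \otimes \lambda$ and $\pair{\lambda}{\lambda}_L$ scale by $f^2$, so $s_0$ is globally well-defined.

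The central claim is that $s_0$ is parallel for the induced representation on $L \otimes L$. To establish this, fix any linear connection on $E$ and form the $E$-connection $\nabla^E$ of Example \ref{repCou}. By Remark \ref{rmk:top}, the extension of $\nabla^E$ to $L$ coincides with the canonical representation $\nabla^{\topp}$ from Example \ref{ex:wedgetop}. By Proposition \ref{prop:selfadjoint}, $\nabla^E$ is self-adjoint with respect to the bilinear form on $E$, and this property propagates by the derivation rule to $\nabla^{\topp}$ acting on $L$, giving
\[ \rho(e)\pair{l_1}{l_2}_L = \pair{\nabla^{\topp}_e l_1}{l_2}_L + \pair{l_1}{\nabla^{\topp}_e l_2}_L.\]
Writing $\nabla^{\topp}_e \lambda = g_e \lambda$ and $g = \pair{\lambda}{\lambda}_L$, the above yields $\rho(e)(g) = 2 g_e g$, and then a one-line computation gives $\nabla^{L\otimes L}_e s_0 = -g^{-2}\rho(e)(g)\,\lambda\otimes\lambda + g^{-1}(2 g_e\, \lambda\otimes\lambda) = 0$.

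Consequently, with $s_0$ as the reference trivialization of $L \otimes L$, the cocycle $\xi^{L\otimes L}_{s_0}$ from the modular class construction vanishes identically as an element of $C^1(E)$. Therefore the modular class $\frac{1}{2}[\xi^{L\otimes L}] \in H^1(E)$ is zero.

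The only real content beyond assembling existing results is verifying that self-adjointness of $\nabla^E$ is inherited by its extension to the top wedge power, which is a standard consequence of the derivation rule. The conceptual point, worth emphasizing in the exposition, is that the doubling $L \mapsto L \otimes L$ in the definition of the modular class is precisely what makes the canonical section $s_0$ available, since $L \otimes L \cong \End(L)$ via the bilinear form.
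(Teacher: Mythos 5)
Your proof is correct. Both your argument and the paper's rest on the same key input: the canonical representation $\nabla^{\topp}$ on $\bigwedge^{\topp}E$ is self-adjoint with respect to the bilinear form induced by $\pair{\cdot}{\cdot}$ (which you, like the paper, obtain via Remark \ref{rmk:top} and Proposition \ref{prop:selfadjoint}; a direct check from Axiom (C2) also works). Where you diverge is in how that input is converted into vanishing of the class. The paper argues formally: $L\otimes L^*$ is canonically trivial, so $[\xi^{L^*}]=-[\xi^{L}]$; the bilinear form identifies $\nabla^\dagger$ with $\nabla^*$, so self-adjointness forces $[\xi]=-[\xi]=0$. You instead exhibit the explicit $\nabla$-parallel global section $s_0=\pair{\lambda}{\lambda}_L^{-1}\,\lambda\otimes\lambda$ of $L\otimes L$, so that $\xi^{L\otimes L}_{s_0}$ vanishes identically as a cochain, not merely in cohomology. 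Your route is essentially a rigorous version of the heuristic the authors record in the Remark immediately following their proof (``a Courant algebroid comes equipped with an invariant nondegenerate bilinear form, [so] there is always an induced invariant volume form on $\bigwedge^{\topp}E$''); it has the advantage of producing a canonical invariant trivialization of $L\otimes L$ and a canonically zero representative, while the paper's duality argument is shorter and applies verbatim to any self-adjoint representation on a line bundle without needing to invert $\pair{\lambda}{\lambda}_L$. The one point worth stating explicitly in your write-up is that the induced form on $\bigwedge^{\topp}E$ is indeed nondegenerate (for a local frame $e_1,\dots,e_n$ one has $\pair{e_1\wedge\cdots\wedge e_n}{e_1\wedge\cdots\wedge e_n}_L=\det\pair{e_i}{e_j}\neq 0$), which is what makes $s_0$ well-defined; with that noted, the argument is complete.
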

\begin{proof}
For any representation $\nabla$ of $E$ on a line bundle $L$, we can consider the dual representation $\nabla^*$ on $L^*$, given by
\[ \pair{\nabla^*_e \alpha}{\ell} = \rho(e)\pair{\alpha}{\ell} - \pair{\alpha}{\nabla_e \ell}\]
for $e \in \Gamma(E)$, $\ell \in \Gamma(L)$, and $\alpha \in \Gamma(L^*)$. Since $L \otimes L^*$ is canonically isomorphic to the trivial representation on $M \times \R$, Proposition \ref{prop:modtensor} implies that $[\xi^{L^*}] = -[\xi^L]$. 

Comparing with \eqref{eqn:adjoint}, we see that, if $L$ has a nondegenerate bilinear form, then $\nabla^\dagger$ is obtained by transferring $\nabla^*$ via the isomorphism $L \cong L^*$ associated to the bilinear form, so the modular classes associated to $\nabla$ and $\nabla^\dagger$ are minuses of each other. Thus, if the representation on $L$ is self-adjoint, then its modular class vanishes.

The result then follows from the fact that the canonical representation on $\bigwedge^{\topp} E$ is self-adjoint; this can be seen by a direct calculation, using Axiom (C2) in Definition \ref{courant}, or alternatively as a consequence of Remark \ref{rmk:top} and Proposition \ref{prop:selfadjoint}.
\end{proof}

\begin{remark}
Although Courant algebroids are unimodular, it should be emphasized that modular classes do not vanish in general. In fact, given any cocycle $\xi \in C^1(E) = \Gamma(E)$, one can use the equation $\nabla_e \lambda = \pair{\xi}{e}\lambda$ to define a representation on a trivial line bundle for which the modular class is $[\xi]$.
\end{remark}

\begin{remark}
The unimodularity of Courant algebroids has a simple explanation. From the definition, it is clear that the modular class is the obstruction to the existence of an invariant volume form. Because a Courant algebroid comes equipped with an invariant nondegenerate bilinear form, there is always an induced invariant volume form on $\bigwedge^{\topp} E$. In the next section, we construct higher characteristic classes which are obstructions to the existence of an invariant (positive definite) metric. It is known that there exist examples of Lie algebras that admit an invariant nondegenerate bilinear form but do not admit an invariant \emph{positive-definite} bilinear form. We expect that the higher characteristic classes can detect such phenomena.
\end{remark}

\section{Characteristic classes}\label{sec:characteristic}

In this section, we describe the construction of intrinsic characteristic classes associated to a Courant algebroid $E$. Surprisingly, the construction is even simpler than that of Lie algebroids (e.g.\ \cite{crainic-fernandes:secondary}) because it does not require a representation up to homotopy. Indeed, the results of previous sections allow for a construction that is similar to the classical theory, e.g.\ \cite{chern-simons}. We will sometimes omit proofs that closely resemble proofs in the classical theory.

\subsection{Chern forms and Chern-Simons forms}

Let $E \to M$ be a Courant algebroid, and let $\nabla$ be an $E$-connection on a vector bundle $B \to M$. We first recall a few facts about $\End(B)$-valued cochains.
\begin{itemize}
\item The elements of $C^k(E; \End(B))$ can be identified with degree $k$ operators on $C^\bullet(E;B)$ that are $C^\bullet(E)$-linear.
\item Composition of operators gives a product on $C^\bullet(E;\End(B))$.
\item The covariant derivative operator $D_{\widetilde{\nabla}}$ associated to the $E$-connection \eqref{eqn:connend} is given by the graded commutator of operators: $D_{\widetilde{\nabla}}\phi = [D_\nabla, \phi ]$ for $\phi \in C^\bullet(E;\End(B))$.
\item There is a natural trace map $\tr: C^\bullet(E; \End(B)) \to C^\bullet(E)$. Moreover, the identity
\begin{equation}\label{eqn:dtrace}
d_E \tr(\phi) = \tr(D_{\widetilde{\nabla}}\phi)
\end{equation}
 holds for all $\phi \in C^\bullet(E;\End(B))$.
\end{itemize}

For $k=1,2,\dots$, we define the Chern forms $\ch_k(\nabla) \in C^{2k}(E)$ by
\[ \ch_k(\nabla) = \tr(F_\nabla^k).\]
From \eqref{eqn:dtrace} and the Bianchi identity, we see that 
\begin{equation}\label{eqn:dch}
    d_E \ch_k(\nabla) = 0.
\end{equation}

Given a path $\nabla_t$ of $E$-connections on $B$, for $0 \leq t \leq 1$, we can define the Chern-Simons forms $\cs_k(\nabla_t) \in C^{2k-1}(E)$ by
\begin{equation} \label{eqn:cs} \cs_k(\nabla_t) = k \int_0^1 \tr\left(\frac{d}{dt}[\nabla_t] \cupprod F_{\nabla_t}^{k-1} \right) dt.
\end{equation}
The Chern-Simons forms are transgressions of the Chern forms, in the following sense.
\begin{prop}\label{prop:dcs}
Let $\nabla_t$ be a path of $E$-connections on $B$. Then
$d_E \cs_k(\nabla_t) = \ch_k(\nabla_1) - \ch_k(\nabla_0)$.
Therefore, the cohomology class $[\ch_k(\nabla)] \in H^{2k}(E)$ is independent of $\nabla$.
\end{prop}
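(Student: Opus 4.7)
The plan is to reproduce the classical Chern-Weil transgression argument in the Keller-Waldmann complex, using the Cartan formula (Theorem \ref{thm:cartan}), the Bianchi identity (Proposition \ref{prop:bianchi}), and the trace identity \eqref{eqn:dtrace}. Since $d_E$ is $\R$-linear and $t$-independent, establishing the pointwise identity $\frac{d}{dt}\ch_k(\nabla_t) = k\,d_E \tr\bigl(\dot{\nabla}_t \cupprod F_{\nabla_t}^{k-1}\bigr)$, where $\dot{\nabla}_t := \frac{d}{dt}[\nabla_t]$, reduces the proposition to integrating from $0$ to $1$ and invoking the definition of $\cs_k(\nabla_t)$ in \eqref{eqn:cs}.

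To reach that identity I would proceed in three steps. First, observe that two $E$-connections on $B$ differ by a $C^\infty(M)$-bilinear map $\Gamma(E) \times \Gamma(B) \to \Gamma(B)$ (the Leibniz failures cancel), so $\dot{\nabla}_t$ is a well-defined element of $C^1(E;\End(B))$, and in particular the space of $E$-connections is an affine space modeled on $C^1(E;\End(B))$. Second, I would establish the variational formula
\begin{equation*}
\frac{d}{dt} F_{\nabla_t} = D_{\widetilde{\nabla}_t}(\dot{\nabla}_t),
\end{equation*}
by differentiating the defining formula \eqref{eqn:curv} and comparing with \eqref{eqn:dcartan} applied to the $1$-cochain $\dot{\nabla}_t$, where the induced connection $\widetilde{\nabla}$ on $\End(B)$ is given by \eqref{eqn:connend}. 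Third, I would run the main computation
\begin{equation*}
\frac{d}{dt}\ch_k(\nabla_t) = k\,\tr\bigl(D_{\widetilde{\nabla}_t}(\dot{\nabla}_t) \cupprod F_{\nabla_t}^{k-1}\bigr) = k\,\tr\,D_{\widetilde{\nabla}_t}\bigl(\dot{\nabla}_t \cupprod F_{\nabla_t}^{k-1}\bigr) = k\,d_E\,\tr\bigl(\dot{\nabla}_t \cupprod F_{\nabla_t}^{k-1}\bigr),
\end{equation*}
where the first equality uses graded cyclicity of $\tr$ and that $F_{\nabla_t}$ sits in even degree, the second uses the graded Leibniz rule for $D_{\widetilde{\nabla}_t}$ together with the extended Bianchi identity $D_{\widetilde{\nabla}_t}(F_{\nabla_t}^{k-1}) = 0$, and the third is \eqref{eqn:dtrace}.

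Integrating over $[0,1]$ and pulling $d_E$ outside the integral then yields $\ch_k(\nabla_1) - \ch_k(\nabla_0) = d_E \cs_k(\nabla_t)$; in particular $[\ch_k(\nabla)]$ is independent of $\nabla$ because any two $E$-connections can be joined by the affine path $\nabla_t = (1-t)\nabla_0 + t\nabla_1$. The main obstacle I anticipate is sign bookkeeping in the graded cyclicity of $\tr$ and the graded Leibniz rule for $D_{\widetilde{\nabla}}$: $\dot{\nabla}_t$ is odd while $F_{\nabla_t}$ is even, and the cochains are only graded-commutative under $\cupprod$. Fortunately the even degree of $F_{\nabla_t}$ makes the potentially annoying signs collapse, so the manipulations go through exactly as in the classical proof; verifying the variational formula for curvature in our Courant setting is the only step requiring genuinely new content.
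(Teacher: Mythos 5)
Your proposal is correct and follows essentially the same route as the paper's proof: both reduce to the Fundamental Theorem of Calculus, the variational formula $\frac{d}{dt}[F_{\nabla_t}] = D_{\widetilde{\nabla}_t}\left(\frac{d}{dt}[\nabla_t]\right)$, the Bianchi identity, and the trace identity \eqref{eqn:dtrace}. The only difference is that you spell out the cyclicity-of-trace and graded Leibniz steps that the paper leaves implicit, which is fine.
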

\begin{proof}
By the Fundamental Theorem of Calculus, we have
\begin{align*}
    \ch_k(\nabla_1) - \ch_k(\nabla_0) &= \int_0^1 \frac{d}{dt}[\ch_k(\nabla_t)] dt \\
    &= \int_0^1 \frac{d}{dt}[\tr(F_{\nabla_t}^k)] dt \\
    &= k \int_0^1 \tr\left( \frac{d}{dt}[F_{\nabla_t}] \cupprod F_{\nabla_t}^{k-1}\right) dt.
\end{align*}
Using the fact that $\frac{d}{dt}[F_{\nabla_t}] = D_{\widetilde{\nabla}_t}\left(\frac{d}{dt}[\nabla_t]\right)$ and the Bianchi identity, we see that the above is
\begin{align*}
    k \int_0^1 \tr\left( D_{\widetilde{\nabla}_t}\left(\frac{d}{dt}[\nabla_t]\right) \cupprod F_{\nabla_t}^{k-1}\right) dt &= k \int_0^1 d_E \tr \left(\frac{d}{dt}[\nabla_t] \cupprod F_{\nabla_t}^{k-1}\right) dt \\
    &= d_E \cs_k(\nabla_t). \qedhere
\end{align*}
\end{proof}

\begin{prop}\label{prop:cshom}
Let $\nabla_t,\nabla'_t$ be two paths of $E$-connections on $B$ with $\nabla_0=\nabla'_0$ and $\nabla_1=\nabla'_1$. Then $cs_k(\nabla_t)-cs_k(\nabla'_t)$ is exact.
\end{prop}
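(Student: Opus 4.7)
The plan is to follow the classical Chern-Simons argument for transgression: introduce a two-parameter family of connections interpolating between the two paths and combine the Bianchi identity with a Stokes-type double integral. Since any two $E$-connections on $B$ differ by an element of $C^1(E;\End(B))$, the set of $E$-connections on $B$ is an affine space, so one may form
\[
\nabla_{s,t} = (1-s)\nabla_t + s\nabla'_t, \qquad (s,t) \in [0,1]^2,
\]
which satisfies $\nabla_{0,t} = \nabla_t$, $\nabla_{1,t} = \nabla'_t$, and---crucially---$\partial_s \nabla_{s,0} = 0 = \partial_s \nabla_{s,1}$, because the two paths agree at the endpoints.

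Write $A = \partial_s \nabla_{s,t}$ and $B = \partial_t \nabla_{s,t}$, both elements of $C^1(E;\End(B))$ depending on $(s,t)$, and abbreviate $F = F_{\nabla_{s,t}}$ and $D = D_{\widetilde{\nabla}_{s,t}}$. A first-order expansion of \eqref{eqn:curv} (or equivalently the observation in the proof of Proposition \ref{prop:curv} that $F_\nabla(\cdot,\cdot)(b) = D_\nabla^2 b$) gives the variational formulas $\partial_s F = DA$ and $\partial_t F = DB$. Setting
\[
\alpha(s,t) = \tr(B \cupprod F^{k-1}), \qquad \beta(s,t) = \tr(A \cupprod F^{k-1}),
\]
one has $\cs_k(\nabla_t) = k\int_0^1 \alpha(0,t)\,dt$ and $\cs_k(\nabla'_t) = k\int_0^1 \alpha(1,t)\,dt$, so
\[
\cs_k(\nabla'_t) - \cs_k(\nabla_t) = k\int_0^1\!\!\int_0^1 \partial_s \alpha(s,t)\, ds\, dt.
\]

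The heart of the proof is the identity
\[
\partial_s \alpha - \partial_t \beta = (k-1)\, d_E \tr(A \cupprod B \cupprod F^{k-2}).
\]
I would derive it by differentiating $\alpha$ and $\beta$, using equality of mixed partials $\partial_s B = \partial_t A$ to cancel the terms involving $\partial_s B$ and $\partial_t A$, substituting $\partial_s F = DA$ and $\partial_t F = DB$, and then applying the graded Leibniz rule for $D$ together with the Bianchi identity $DF = 0$ and the trace identity \eqref{eqn:dtrace} to recognize the remainder as $d_E$ of a trace. Substituting this identity back into the double integral, the $\partial_t \beta$ contribution integrates to $k\int_0^1[\beta(s,1)-\beta(s,0)]\,ds$, which vanishes because $A(s,0) = A(s,1) = 0$ forces $\beta(s,0) = \beta(s,1) = 0$. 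What remains exhibits $\cs_k(\nabla'_t) - \cs_k(\nabla_t)$ as $d_E$ applied to $k(k-1)\int_0^1\!\!\int_0^1 \tr(A \cupprod B \cupprod F^{k-2})\,ds\,dt$, proving exactness.

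The main obstacle I anticipate is the sign bookkeeping in the key identity: one must permute $A$, $B$, $DA$, $DB$, and $F$ using both graded commutativity of $\cupprod$ and graded cyclicity of $\tr$, while tracking signs from the graded Leibniz rule for $D$ on products of mixed degrees. That $F$ and its variational derivatives all have even degree simplifies matters considerably, since they can be moved freely past anything in the trace. The case $k=1$ should be handled separately but is immediate: the proposed $\gamma$ is vacuous, and $\cs_1(\nabla_t) = \tr(\nabla_1 - \nabla_0)$ depends only on the endpoints, so the difference in question is already zero.
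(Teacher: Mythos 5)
Your strategy is the same as the paper's: form the two-parameter family $\nabla_{s,t}=(1-s)\nabla_t+s\nabla'_t$, differentiate under the integral sign, dispose of the mixed-partial term using the vanishing of $A=\partial_s\nabla_{s,t}$ at $t=0,1$, and recognize what remains as $d_E$ of a double integral. The one genuine error is in your key identity. You claim
\[
\partial_s\alpha-\partial_t\beta=(k-1)\,d_E\tr(A\cupprod B\cupprod F^{k-2}),
\]
justified by the remark that $F$, having even degree, ``can be moved freely past anything in the trace.'' That is not so: the product on $C^\bullet(E;\End(B))$ is composition of operators, which is not graded-commutative even in even degrees, and the trace only gives graded \emph{cyclic} invariance, not full permutation invariance. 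Carrying out your own derivation carefully, one has $\partial_s(F^{k-1})=\sum_{i=0}^{k-2}F^i\cupprod DA\cupprod F^{k-2-i}$, and after cancelling the $\partial_sB=\partial_tA$ terms and applying the Leibniz rule, the Bianchi identity, and cyclicity of the trace, the correct primitive is
\[
\partial_s\alpha-\partial_t\beta=d_E\sum_{i=0}^{k-2}\tr\bigl(A\cupprod F^i\cupprod B\cupprod F^{k-2-i}\bigr),
\]
which is exactly the expression appearing in the paper's proof. Your formula agrees with this only for $k\le 2$, or when $F$ happens to commute with $A$ and $B$. Since the proposition only requires \emph{some} primitive, the conclusion still follows once the identity is corrected, but as stated the identity is false for $k\ge 3$ and the heuristic behind it should be discarded.
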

\begin{proof}
Let $\bnabla = (1-s)\nabla_t + s\nabla'_t$. Then
\begin{align*}
    \cs_k(\nabla'_t) - \cs_k(\nabla_t) &= \int_0^1 \frac{d}{ds}\left[\cs_k(\bnabla)\right] ds \\
    &= \int_0^1 \frac{d}{ds} \left[ \int_0^1 k \tr \left(\frac{\partial}{\partial t}[\bnabla] \cupprod F_{\bnabla}^{k-1} \right) dt \right] ds \\
    &= k \int_0^1 \int_0^1 \tr \left( \frac{\partial^2}{\partial s \partial t}[\bnabla] \cupprod F_{\bnabla^{k-1}} + \frac{\partial}{\partial t}[\bnabla] \cupprod \frac{\partial}{\partial s} [F_{\bnabla}^{k-1}]\right)dt\,ds.
\end{align*}
Integrating by parts in the first term, we get
\begin{align*}
 & k \int_0^1 \int_0^1 \tr \left( \frac{\partial}{\partial t}[\bnabla] \cupprod \frac{\partial}{\partial s} [F_{\bnabla}^{k-1}] -  \frac{\partial}{\partial s}[\bnabla] \cupprod \frac{\partial}{\partial t} [F_{\bnabla}^{k-1}] \right)dt\,ds \\
 = & k \int_0^1 \int_0^1 \sum_{i=0}^{k-2} \tr \left( \frac{\partial}{\partial t}[\bnabla]  \cupprod F_{\bnabla}^i \cupprod D_{\widetilde{\bnabla}}\left(\frac{\partial}{\partial s}[\bnabla]\right) \cupprod F_{\bnabla}^{k-i-2} \right. \\
 & \left. - \frac{\partial}{\partial s}[\bnabla] \cupprod F_{\bnabla}^i \cupprod D_{\widetilde{\bnabla}}\left(\frac{\partial}{\partial t}[\bnabla]\right) \cupprod F_{\bnabla}^{k-i-2} \right) dt\,ds \\
 =& d_E k \int_0^1 \int_0^1 \sum_{i=0}^{k-2} \tr \left( \frac{\partial}{\partial s}[\bnabla] \cupprod F_{\bnabla}^i \cupprod \frac{\partial}{\partial t}[\bnabla] \cupprod F_{\bnabla}^{k-i-2}\right) dt\,ds,
\end{align*}
which is exact.
\end{proof}
It follows from Proposition \ref{prop:cshom} that, if $\nabla_0, \nabla_1$ are $E$-connections on $B$ such that $\ch_k(\nabla_0) = \ch_k(\nabla_1) = 0$ for some $k$, then we can unambiguously define $[\cs_k(\nabla_0,\nabla_1)] = [\cs_k(\nabla_t)] \in H^{2k-1}(E)$, where $\nabla_t$ is any path from $\nabla_0$ to $\nabla_1$. Furthermore, the triangle identity
\begin{equation}\label{eqn:triangle}
 [\cs_k(\nabla_0, \nabla_1)] + [\cs_k(\nabla_1,\nabla_2)] = [\cs_k(\nabla_0, \nabla_2)]
 \end{equation}
holds for all $\nabla_i$ such that $\ch_k(\nabla_i) = 0$.

\begin{remark}
Taking $\nabla_t$ to be the straight-line path between $\nabla_0$ and $\nabla_1$, we can explicitly evaluate the integral in \eqref{eqn:cs} to find formulas for $[\cs_k(\nabla_0,\nabla_1)]$. Specifically, if we write $\phi = \nabla_1 - \nabla_0$, then we can let $\nabla_t = \nabla_0 + t\phi$. Then the first two Chern-Simons forms are
\[ \cs_1(\nabla_t) = \int_0^1 \tr(\phi) dt = \tr(\phi)\]
and 
\begin{align*}
 \cs_2(\nabla_t) &= 2\int_0^1 \tr(\phi \cupprod F_{\nabla_t}) dt \\
 &= 2 \int_0^1 \tr\left( \phi \cupprod (F_{\nabla_0} + t D_{\tilde{\nabla}_0} \phi + t^2 \phi^2) \right) dt\\
 &= \tr\left( 2\phi \cupprod F_{\nabla_0} + \phi \cupprod D_{\tilde{\nabla}_0}\phi + \frac{2}{3} \phi^3\right).
 \end{align*}
 We note that, when $\nabla_0$ is flat, this formula for $\cs_2(\nabla_t)$ is (up to scalar) formally identical to the formula for the classical Chern-Simons $3$-form.
\end{remark}

\begin{remark}
The above construction fits into the general theory of transgressions as described in, e.g., \cite{crainic-fernandes:secondary, freed:cs2}. A rigorous treatment requires infinite-dimensional analysis, but the picture is quite nice, so we sketch it here.

We can view $\ch_k$ as being a $C^{2k}(E)$-valued function on the space of $B$-connections on $E$. From \eqref{eqn:dch}, we have $d_E \ch_k = 0$. The formula \eqref{eqn:cs} essentially defines a $C^{2k-1}(E)$-valued $1$-form $\alpha_k$ on the space of $B$-connections on $E$, such that 
\[\cs_k(\nabla_t) = \int_{\nabla_t} \alpha.\] 
Proposition \ref{prop:dcs} can then be interpreted as saying that $d_E \alpha_k = \delta \ch_k$, where $\delta$ is the de Rham operator. Thus, $\alpha_k$ is a transgression of $\ch_k$.

There is a $C^{2k-2}(E)$-valued $2$-form $\beta_k$ that is a transgression of $\alpha_k$. Its existence explains why Proposition \ref{prop:cshom} is true; indeed, a formula for $\beta_k$ can be extracted from the proof of Proposition \ref{prop:cshom}.

One could continue, constructing higher transgression forms such that $\ch_k + \alpha_k + \beta_k + \dots$ is closed in the total complex of $C(E)$-valued differential forms on the space of $B$-connections on $E$.
\end{remark}

\subsection{Gauge and metric compatibility}
Let $u$ be an automorphism of $B$ covering the identity map on $M$. Then $u$ acts on $E$-connections on $B$ by gauge transformations, as follows:
\[ u(\nabla)_e b = (u \circ \nabla_e \circ u^{-1}) b.\]
The Chern-Simons forms satisfy the following ``small gauge-invariance'' property.
\begin{prop}\label{prop:gauge}
Let $\nabla$ be an $E$-connection on $B$, and let $u_t$ be a path of automorphisms of $B$ covering the identity map on $M$, with $u_0 = \id$. Then $\cs_k(u_t(\nabla))$ is exact.
\end{prop}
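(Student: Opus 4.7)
The plan is to exhibit an explicit primitive for $\cs_k(u_t(\nabla))$ by recognizing that, along a gauge orbit, the variation of the connection is itself a covariant derivative. Write $\nabla_t = u_t(\nabla)$ and set
\[ X_t = \dot{u}_t \circ u_t^{-1} \in \Gamma(\End B) = C^0(E;\End B).\]
A direct product-rule calculation, using $\tfrac{d}{dt} u_t^{-1} = -u_t^{-1} \dot{u}_t u_t^{-1}$, gives $(\dot{\nabla}_t)_e = [X_t,(\nabla_t)_e]$ for every $e \in \Gamma(E)$. Comparing with the definition \eqref{eqn:connend} of $\widetilde{\nabla}_t$ and \eqref{eqn:dcartan} in degree zero, this rewrites as the identity
\[ \dot{\nabla}_t = -D_{\widetilde{\nabla}_t} X_t \qquad \text{in } C^1(E;\End B).\]

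Next I would substitute this into the defining integral \eqref{eqn:cs}. Because each $\widetilde{\nabla}_{t,e}$ is a commutator derivation of $\End(B)$, the operator $D_{\widetilde{\nabla}_t}$ satisfies a Leibniz rule with respect to $\cupprod$. Combined with the Bianchi identity $D_{\widetilde{\nabla}_t} F_{\nabla_t} = 0$ from Proposition \ref{prop:bianchi}, iterated Leibniz gives
\[ D_{\widetilde{\nabla}_t}\bigl(X_t \cupprod F_{\nabla_t}^{k-1}\bigr) = D_{\widetilde{\nabla}_t}(X_t) \cupprod F_{\nabla_t}^{k-1}.\]
Applying $\tr$ and using the trace-commutation identity \eqref{eqn:dtrace}, the integrand in \eqref{eqn:cs} becomes $-d_E \tr\bigl(X_t \cupprod F_{\nabla_t}^{k-1}\bigr)$, so that
\[ \cs_k(\nabla_t) = -d_E \Bigl( k \int_0^1 \tr\bigl(X_t \cupprod F_{\nabla_t}^{k-1}\bigr)\, dt \Bigr),\]
which is exact as required.

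The main subtlety — really the only nontrivial step — is the identity $\dot{\nabla}_t = -D_{\widetilde{\nabla}_t} X_t$. The issue is that $[X_t, (\nabla_t)_e]$ is a commutator of a tensor with a non-tensorial (first order) operator; one must check that the non-tensorial pieces cancel, so that the result is genuinely an element of $C^1(E;\End B)$, and that the sign matches the convention of \eqref{eqn:connend}. This is a short $C^\infty(M)$-linearity check exploiting that $X_t$ is itself a tensor. Once it is in place, the rest of the argument runs on the Cartan-style machinery of Section \ref{sec:characteristic}. As a consistency check, one notes that gauge transformations preserve curvature by conjugation, so $\ch_k(\nabla_t) = \tr(F_{\nabla_t}^k)$ is \emph{constant} in $t$; by Proposition \ref{prop:dcs} this already implies $\cs_k(\nabla_t)$ is closed, and the explicit primitive above upgrades closedness to exactness.
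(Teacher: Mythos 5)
Your proof is correct and takes essentially the same route as the paper's: both identify $\frac{d}{dt}[u_t(\nabla)]$ as a commutator with $\dot{u}_t u_t^{-1}$ (equivalently $-D_{\widetilde{\nabla}_t}(\dot{u}_t u_t^{-1})$) and then use the Bianchi identity together with \eqref{eqn:dtrace} to pull the integrand inside $d_E$. Indeed, by cyclicity of the trace your primitive $k\int_0^1 \tr\bigl(X_t \cupprod F_{\nabla_t}^{k-1}\bigr)\,dt$ coincides with the paper's $k\int_0^1 \tr\bigl(u_t^{-1}\dot{u}_t F_\nabla^{k-1}\bigr)\,dt$.
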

\begin{proof}
Clearly, $F_{u_t(\nabla)} = u_t F_\nabla u_t^{-1}$. Additionally, writing $\frac{d}{dt}[u_t] = \dot{u}_t$, we have
\begin{align*}
\frac{d}{dt}\left[u_t(\nabla)\right] &= \frac{d}{dt} [ u_t D_\nabla u_t^{-1} ] \\
&= \dot{u}_t D_\nabla u_t^{-1} - u_t D_\nabla u_t^{-1} \dot{u}_t u_t^{-1} \\
&= [\dot{u}_t u_t^{-1}, u_t D_\nabla u_t^{-1}].
\end{align*}
Using these calculations and the Bianchi identity, we see that
\begin{align*}
\cs_k(u_t(\nabla)) &= k \int_0^1 \tr\left( [\dot{u}_t u_t^{-1}, u_t D_\nabla u_t^{-1}] \cupprod u_t F_\nabla^{k-1} u_t^{-1} \right) dt \\
&= k \int_0^1 \tr \left( u_t [u_t^{-1} \dot{u}_t, D_\nabla] \cupprod F_\nabla^{k-1} u_t^{-1} \right) dt\\
&= -k \int_0^1 \tr \left( [D_\nabla, u_t^{-1} \dot{u}_t F_\nabla^{k-1}] \right) dt\\
&= -k d_E \int_0^1 \tr \left( u_t^{-1} \dot{u}_t F_\nabla^{k-1} \right) dt,
\end{align*}
which is exact.
\end{proof}

Suppose that $B$ is equipped with a nondegenerate bilinear form $\pair{\cdot}{\cdot}_B$. Then the Chern-Simons forms satisfy the following compatibility condition with respect to the adjoint operation (see \eqref{eqn:adjoint}).
\begin{prop}\label{prop:csadjoint}
For any path $\nabla_t$ of $E$-connections on $B$, 
\[ \cs_k(\nabla_t^\dagger) = (-1)^k \cs_k(\nabla_t).\]
\end{prop}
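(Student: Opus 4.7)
The plan is to substitute the definition of $\cs_k(\nabla_t^\dagger)$ from \eqref{eqn:cs} and reduce it to $(-1)^k\cs_k(\nabla_t)$ by tracking how $\dagger$ interacts with each ingredient. Two facts do most of the work: $F_{\nabla_t^\dagger}=-(F_{\nabla_t})^\dagger$ (Proposition \ref{prop:adjointcurve}), and the analogous sign for the time derivative, $\frac{d}{dt}[\nabla_t^\dagger]=-\bigl(\frac{d}{dt}[\nabla_t]\bigr)^\dagger$. The latter expresses the fact that $\nabla\mapsto\nabla^\dagger$ is affine but orientation-reversing on the space of $E$-connections: a direct computation from \eqref{eqn:adjoint} shows that if $\nabla'=\nabla+\phi$ with $\phi\in C^1(E;\End(B))$, then $(\nabla')^\dagger=\nabla^\dagger-\phi^\dagger$, and differentiation produces the claimed identity.

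Writing $\alpha_t=\frac{d}{dt}[\nabla_t]$ and $F_t=F_{\nabla_t}$, plugging the two identities into \eqref{eqn:cs} factors out $(-1)\cdot(-1)^{k-1}=(-1)^k$, giving
\[\cs_k(\nabla_t^\dagger)=(-1)^k\,k\int_0^1\tr\bigl(\alpha_t^\dagger\cupprod(F_t^\dagger)^{k-1}\bigr)\,dt.\]
It therefore suffices to show the integrand equals $\tr(\alpha_t\cupprod F_t^{k-1})$. For this I would establish two pointwise identities for cochains in $C^\bullet(E;\End(B))$: (a) $(\phi\cupprod\psi)^\dagger=(-1)^{|\phi||\psi|}\psi^\dagger\cupprod\phi^\dagger$, and (b) $\tr(\phi\cupprod\psi)=(-1)^{|\phi||\psi|}\tr(\psi\cupprod\phi)$. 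Both follow by unpacking the shuffle sum and invoking the pointwise endomorphism identities $(AB)^\dagger=B^\dagger A^\dagger$ and $\tr(AB)=\tr(BA)$, together with a shuffle reindexing that moves one block past the other and accounts for the sign $(-1)^{|\phi||\psi|}$. Iterating (a) gives $(\alpha_t\cupprod F_t^{k-1})^\dagger=(F_t^\dagger)^{k-1}\cupprod\alpha_t^\dagger$; here all intermediate signs are even because $|F_t|=2$. Applying the pointwise identity $\tr((\cdot)^\dagger)=\tr(\cdot)$ and then (b) with $|F_t^{k-1}||\alpha_t|=2(k-1)\cdot 1$ (again even, hence trivial sign) cycles $\alpha_t^\dagger$ back to the front and yields $\tr(\alpha_t^\dagger\cupprod(F_t^\dagger)^{k-1})=\tr(\alpha_t\cupprod F_t^{k-1})$, completing the argument.

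The hard part is purely sign bookkeeping: three independent $\dagger$-induced twists interact, namely the orientation reversal of $\dagger$ on the affine space of connections, the block-reversal produced by $\dagger$ on a cup product, and the graded cyclicity of the trace. The reason the final answer collapses to the clean factor $(-1)^k$ is that $F_t$ has even degree, so every sign appearing after the initial $(-1)^k$ is trivial. As a sanity check, for $k=1$ the formula reduces to $\tr(-\alpha_t^\dagger)=-\tr(\alpha_t)$, and for $k=2$ the identity $\tr(\alpha_t^\dagger\cupprod F_t^\dagger)=\tr(\alpha_t\cupprod F_t)$ is visible after one application of (a) and (b).
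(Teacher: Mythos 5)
Your proof is correct and follows the same route as the paper: both rest on the two identities $F_{\nabla_t^\dagger}=-(F_{\nabla_t})^\dagger$ (Proposition \ref{prop:adjointcurve}) and $\frac{d}{dt}[\nabla_t^\dagger]=-\frac{d}{dt}[\nabla_t]^\dagger$ (from \eqref{eqn:adjoint}), substituted into \eqref{eqn:cs}. The paper simply declares the resulting sign count ``immediate,'' whereas you spell out the remaining bookkeeping (the anti-automorphism property of $\dagger$ on cup products, graded cyclicity of the trace, and $\tr((\cdot)^\dagger)=\tr(\cdot)$), all of which checks out.
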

\begin{proof}
From Proposition \ref{prop:adjointcurve}, we have $F_{\nabla_t^\dagger} = - (F_{\nabla_t})^\dagger$. Also, from \eqref{eqn:adjoint} we see that $\frac{d}{dt}[\nabla_t^\dagger] = -\frac{d}{dt}[\nabla_t]^\dagger$. Putting these into the definition of the Chern-Simons forms, we immediately obtain the result.
\end{proof}

\subsection{Secondary characteristic classes}

Recall from Example \ref{repCou} that, given a choice of linear connection $\nabla: \vect(M) \times \Gamma(E) \to \Gamma(E)$, there is an induced $E$-connection $\nabla^E$ on $E$.

\begin{prop}\label{prop:noch}
If $k$ is odd, then $\ch_k(\nabla^E) = 0$.
\end{prop}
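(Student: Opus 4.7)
The plan is to reduce the problem to a skew-adjointness argument for the curvature, after which the vanishing of the trace of an odd power becomes the usual trace-of-skew-symmetric-matrix identity. The first step is to observe that the first identity in Proposition \ref{prop:selfadjoint} is precisely the statement $(\nabla^E)^\dagger = \nabla^E$ in the sense of \eqref{eqn:adjoint}. Combining this with Proposition \ref{prop:adjointcurve} gives $(F_{\nabla^E})^\dagger = -F_{\nabla^E}$, so $F := F_{\nabla^E} \in C^2(E;\End(E))$ is skew-adjoint.

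Next I would verify that, for any $\End(B)$-valued $2$-cochain $\omega$, one has $(\omega^k)^\dagger = (\omega^\dagger)^k$. The more general identity $(\omega_1 \cupprod \omega_2)^\dagger = (-1)^{|\omega_1|\,|\omega_2|}\, \omega_2^\dagger \cupprod \omega_1^\dagger$ follows by taking the pointwise $\End(B)$-adjoint inside the shuffle-sum definition of the cup product (which reverses the order of composition) and relabeling a $(k,m)$-shuffle as an $(m,k)$-shuffle, contributing the expected block-swap sign $(-1)^{km}$. Since $|F| = 2$ is even, these signs are all trivial, and iteration gives $(F^k)^\dagger = (F^\dagger)^k = (-F)^k = (-1)^k F^k$.

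Finally, using the pointwise invariance $\tr(\phi^\dagger) = \tr(\phi)$ of the trace on $\End(E)$, I would conclude
\[ \ch_k(\nabla^E) = \tr(F^k) = \tr\bigl((F^k)^\dagger\bigr) = (-1)^k \tr(F^k) = (-1)^k \ch_k(\nabla^E), \]
which forces $\ch_k(\nabla^E) = 0$ whenever $k$ is odd. The main subtlety lies in the middle paragraph: because $\End(E)$ is noncommutative, one cannot invoke graded commutativity of the cup product directly, so the bookkeeping of signs must be carried out by explicit reshuffling; the argument is ultimately saved by the parity of $|F| = 2$, which is what causes the obstruction signs to drop out.
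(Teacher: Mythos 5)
Your proof is correct and follows essentially the same route as the paper: establish that $F_{\nabla^E}$ is skew-adjoint with respect to $\pair{\cdot}{\cdot}$ (the paper's identity \eqref{eqn:skewf}, which you obtain slightly more slickly by combining the self-adjointness of $\nabla^E$ from Proposition \ref{prop:selfadjoint} with Proposition \ref{prop:adjointcurve}), and then conclude $\tr(F_{\nabla^E}^k)=(-1)^k\tr(F_{\nabla^E}^k)$. Your middle paragraph simply makes explicit the shuffle-sign bookkeeping that the paper compresses into a single ``Therefore,'' and it is carried out correctly.
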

\begin{proof}
A straightforward calculation using Proposition \ref{prop:selfadjoint} shows that 
\begin{equation}\label{eqn:skewf}
\pair{F_{\nabla^E}(e_1,e_2) e_3}{e_4} = \pair{e_3}{-F_{\nabla^E}(e_1,e_2)e_4}
\end{equation}
for all $e_i \in \Gamma(E)$, so $F_{\nabla^E}$ is skew-symmetric with respect to $\pair{\cdot}{\cdot}$. Therefore, $\tr(F_{\nabla^E}^k) = (-1)^k \tr(F_{\nabla^E}^k)$, and the result follows.
\end{proof}

To obtain Chern-Simons forms, we need another $E$-connection. To do this, we choose a (positive definite) metric $g(\cdot,\cdot)$ on $E$ and let $\nabla^{E,g}$ be the adjoint with respect to $g$, given by
\[ g(\nabla^E_{e_1} e_2, e_3) + g(e_2, \nabla^{E,g}_{e_1} e_3) = \rho(e_1)g(e_2,e_3) \]
for $e_i \in \Gamma(E)$. Note that we do not assume any compatibility condition between $g$ and the Courant bracket or anchor.

From Proposition \ref{prop:adjointcurve}, we have $F_{\nabla^{E,g}} = -(F_\nabla)^g$, where the $g$ superscript denotes the adjoint with respect to $g$. It then follows from Proposition \ref{prop:noch} that $\ch_k(\nabla^{E,g}) = 0$ for odd $k$. Proposition \ref{prop:cshom} then implies that $[\cs_k(\nabla^E, \nabla^{E,g})] \in H^{2k-1}(E)$ is a well-defined cohomology class for odd $k$. 

\begin{definition}\label{dfn:intrinsic}
The \emph{secondary characteristic classes} of $E$ are the classes 
\[[\cs_{2k-1}(\nabla^E, \nabla^{E,g})] \in H^{4k-3}(E).\]
\end{definition}
A priori, the secondary characteristic classes depend on the choices of a linear connection $\nabla$ and a metric $g$ on $E$. The following theorem shows that the classes don't depend on these choices, so they are intrinsically defined.

\begin{thm}
The cohomology class $[\cs_k(\nabla^E, \nabla^{E,g})]$ is independent of the choice of linear connection $\nabla$ and metric $g$.
\end{thm}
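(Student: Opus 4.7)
The plan is to fix two choices $(\nabla_0, g_0)$ and $(\nabla_1, g_1)$ and reduce, via the triangle identity \eqref{eqn:triangle}, to the vanishing of two ``diagonal'' Chern-Simons classes. Each of the four $E$-connections $\nabla_0^E$, $\nabla_0^{E,g_0}$, $\nabla_1^E$, $\nabla_1^{E,g_1}$ has $\ch_k = 0$ for odd $k$ (by Proposition \ref{prop:noch} together with the identity $F_{\nabla^{E,g}} = -(F_{\nabla^E})^g$), so expressing $[\cs_k(\nabla_0^E, \nabla_1^{E,g_1})]$ in two different ways -- routing through $\nabla_0^{E,g_0}$ versus through $\nabla_1^E$ -- and subtracting yields
\begin{equation*}
[\cs_k(\nabla_0^E, \nabla_0^{E,g_0})] - [\cs_k(\nabla_1^E, \nabla_1^{E,g_1})] = [\cs_k(\nabla_0^E, \nabla_1^E)] - [\cs_k(\nabla_0^{E,g_0}, \nabla_1^{E,g_1})].
\end{equation*}
It therefore suffices to show that each term on the right vanishes.

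First I would dispatch $[\cs_k(\nabla_0^E, \nabla_1^E)]$. Set $\nabla_s = (1-s)\nabla_0 + s\nabla_1$. Since the assignment $\nabla \mapsto \nabla^E$ of Example \ref{repCou} is affine, the straight-line path $\nabla_s^E = (1-s)\nabla_0^E + s\nabla_1^E$ coincides with $(\nabla_s)^E$, and by Proposition \ref{prop:selfadjoint} each $\nabla_s^E$ is self-adjoint with respect to $\pair{\cdot}{\cdot}$. Applying Proposition \ref{prop:csadjoint} to this path, with the bilinear form on $B = E$ taken to be $\pair{\cdot}{\cdot}$, yields $\cs_k(\nabla_s^E) = (-1)^k \cs_k(\nabla_s^E)$, which for odd $k$ forces $\cs_k(\nabla_s^E) \equiv 0$.

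For the remaining class $[\cs_k(\nabla_0^{E,g_0}, \nabla_1^{E,g_1})]$, I would split it at the intermediate connection $\nabla_1^{E,g_0}$. Along the first leg, the path $\nabla_s^{E,g_0}$ is the pointwise $g_0$-adjoint of $\nabla_s^E$, so Proposition \ref{prop:csadjoint} applied with $g_0$ now as the reference bilinear form gives $\cs_k(\nabla_s^{E,g_0}) = (-1)^k \cs_k(\nabla_s^E) = 0$ for odd $k$. For the second leg, write $g_1(\cdot,\cdot) = g_0(\cdot, S\cdot)$ with $S \in \Gamma(\End(E))$ positive definite and $g_0$-symmetric; a direct manipulation of the defining adjoint relations shows $\nabla_1^{E,g_1} = S^{-1} \nabla_1^{E,g_0} S$, exhibiting $\nabla_1^{E,g_1}$ as the gauge transform of $\nabla_1^{E,g_0}$ by $u = S^{-1}$. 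Since $S$ is positive definite, functional calculus provides a smooth automorphism-valued path $u_t = S^{-t}$ from $\id$ to $S^{-1}$, and Proposition \ref{prop:gauge} then gives $[\cs_k(\nabla_1^{E,g_0}, \nabla_1^{E,g_1})] = 0$. The triangle identity combines the two legs.

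The main delicacy is keeping track of which bilinear form on $E$, namely $\pair{\cdot}{\cdot}$ or the auxiliary metric $g_0$, plays the role of the reference for the adjoint operation in each application of Proposition \ref{prop:csadjoint}; the proposition is invoked twice with different references, and the argument would collapse if the two roles were confused. A secondary technical point is realizing the change of reference metric as a gauge conjugation: positive definiteness of $S$ is essential so that $S^{-t}$ can be defined smoothly via functional calculus and provides a path of honest automorphisms of $E$ from $\id$ to $S^{-1}$.
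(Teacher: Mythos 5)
Your argument is correct and follows essentially the same route as the paper: reduce via the triangle identity \eqref{eqn:triangle} to the two legs $[\cs_k(\nabla_0^E,\nabla_1^E)]$ and $[\cs_k(\nabla_0^{E,g_0},\nabla_1^{E,g_1})]$, kill the first along the affine path of induced $E$-connections (the paper does this by the explicit skew-times-symmetric trace computation, you by self-adjointness plus Proposition \ref{prop:csadjoint}, which amounts to the same thing), and kill the metric leg by realizing the change of metric as a gauge transformation and invoking Proposition \ref{prop:gauge}. The only cosmetic difference is that the paper uses a path of metrics $g_t$ and the induced path of automorphisms $u_t$, whereas you conjugate by the explicit path $S^{-t}$; both are the same gauge-invariance argument.
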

\begin{proof}
Let $\nabla^{E'}$ be the $E$-connection arising from a different choice of linear connection, let $\phi = \nabla^{E'} - \nabla^{E}$, and let $\nabla^E_t = \nabla^E + t\phi$. From Proposition \ref{prop:selfadjoint}, we see that $\pair{\phi_{e_1} e_2}{e_3} = - \pair{e_2}{\phi_{e_1} e_3}$,
so $\phi$ is skew-symmetric with respect to $\pair{\cdot}{\cdot}$. 

The dependence of $\nabla^E$ on $\nabla$ is affine, so $\nabla^E_t$ is the $E$-connection associated to some choice of linear connection for all $t$. Therefore, from \eqref{eqn:skewf}, we see that, when $k$ is odd, $F^{k-1}_{\nabla^E_t}$ is symmetric with respect to $\pair{\cdot}{\cdot}$ for all $t$. 

A basic fact from linear algebra is that the trace of the product of a skew-symmetric matrix with a symmetric matrix vanishes. Thus, $\tr(\phi \cupprod F^{k-1}_{\nabla^E_t}) = 0$, so $\cs_k(\nabla^E_t)$ vanishes. By \eqref{eqn:triangle} and Proposition \ref{prop:csadjoint}, we deduce that $[\cs_k(\nabla^E, \nabla^{E,g})]$ is independent of the choice of linear connection. 

Now let $g_t$ be a path of metrics with $g_0 = g$. Then we can obtain a path $u_t$ of automorphisms of $E$, given by $g_t(e_1,e_2) = g(u_t(e_1),e_2)$ for all $e_1,e_2 \in \Gamma(E)$. 

A straightforward calculation shows that $\nabla^{E,g_t} = u_t (\nabla^{E,g})$. Since $\cs_k(\nabla^{E,g_t}) = \cs_k(u_t(\nabla^{E,g}))$ is exact by Proposition \ref{prop:csadjoint}, we conclude by \eqref{eqn:triangle} that $[\cs_k(\nabla^E, \nabla^{E,g})]$ is independent of the choice of metric.
\end{proof}

\begin{example}
In the case where $M$ is a point, so that $E = \mathfrak{g}$ is a quadratic Lie algebra, there is a unique linear connection, and the associated $E$-connection $\nabla^E = \ad$ is just the adjoint representation. In this case it can be seen that the secondary characteristic classes have natural representatives that are independent of the choice of metric $g$ and given (up to a constant factor) by
\[ \cs_k(\ad,\ad^g) = \tr(\ad^{2k-1}).\]
Thus, as a cocycle,
\[ \cs_k(\ad,\ad^g)(e_1,\dots,e_{2k-1}) = \sum_{\pi \in S_{2k-1}}\sgn(\pi)\tr\left(\ad_{e_{\pi(1)}} \cdots \ad_{e_{\pi(2k-1)}}\right)\]
for $e_i \in \mathfrak{g}$. In the context of Poisson geometry, a similar formula appeared (with signs missing) in \cite{fernandes:connectionspoisson}. Note that, in this case, the bilinear form does not play a role in defining the secondary characteristic classes.
\end{example}

\begin{example}
Consider the case where $E=TM\oplus T^*M$, with an $H$-twisted Courant bracket; see Example \ref{standard-CA}.

Choose a Riemannian metric $g$ on $M$ and denote by $\nabla$ its Levi-Civita connection. Then $g+g^\dagger$ defines a metric on $E$. Since $\nabla$ is torsion-free, we can apply Example \ref{H-Econ} to obtain a linear connection on $E$ such that the associated $E$-connection is given by
\begin{equation*}\nabla^E_{X+\alpha}(Y+\beta)=\nabla_X Y+\nabla^\dagger_X\beta\end{equation*}
for  $X,Y\in\mathfrak{X}(M)$ and $\alpha,\beta\in\Omega^1(M).$ From this expression and the fact that $\nabla$ preserves the metric $g$ we obtain that
$\nabla^{E,g}=\nabla^E$. It immediately follows that all the secondary characteristic classes of $E$ vanish.
\end{example}

\section{Dirac structures}\label{sec:dirac}

We conclude the paper with some remarks about Dirac structures.

Let $E \to M$ be a Courant algebroid. A \emph{Dirac structure} in $E$ is a subbundle $L \subseteq E$ such that $L^\perp = L$ and $\cbrack{\Gamma(L)}{\Gamma(L)} \subseteq \Gamma(L)$. If $L \subseteq E$ is a Dirac structure, then the restriction of the Courant bracket to $L$ is a Lie bracket, giving $L \to M$ the structure of a Lie algebroid. Thus there is an associated cochain complex $\left(\bigwedge\Gamma(L^*), d_L\right)$, where $d_L$ is given by a Cartan formula. The cohomology of this complex is, by definition, the Lie algebroid cohomology $H^\bullet(L)$. There are also secondary characteristic classes (e.g.\ \cite{crainic-fernandes:secondary}) associated to $L$. These classes are elements of $H^{4k-3}(L)$ for $k \geq 1$. The first of these classes is the modular class \cite{elw} of $L$.

It should be emphasized that $H^\bullet(L)$ and the characteristic classes therein are defined intrinsically with respect to the Lie algebroid structure of $L$, so they do not contain any information about how $L$ sits inside of $E$.

Consider a $k$-cochain $\omega \in C^k(E)$. From Definition \ref{cochain} we see that, if $L \subseteq E$ is a Dirac structure, then the restriction of $\omega$ to $\Gamma(L)$ is skew-symmetric. Thus there is a natural map $\pi: C^\bullet(E) \to \bigwedge^\bullet\Gamma(L^*)$. As an immediate consequence of Theorem \ref{thm:cartan}, we see that $\pi$ is compatible with the differentials and therefore induces a map $\pi^*: H^\bullet(E) \to H^\bullet(L)$.

The map $\pi$ can provide information about the relationship between $L$ and $E$. For example, for each $k$, we can define \emph{relative characteristic classes} of $L$ as the difference between the characteristic class in $H^{4k-3}(L)$ and the image under $\pi^*$ of the characteristic class in $H^{4k-3}(E)$. These classes do not automatically vanish; in particular, when $k=1$, it follows from Proposition \ref{prop:modtensor} that the relative characteristic class is just the modular class of $L$.

We note that, in the special case of a projectible Courant algebroid, a relative modular class of a Dirac structure was defined by Grabowski \cite{grabowski:modular}. His definition uses projectibility in a nontrivial way and is not simply a special case of the relative characteristic classes defined here.

\bibliography{courantbib}
\bibliographystyle{amsplain}
\end{document}